\newtheorem{theorem}{Theorem}
\newtheorem{thmalpha}{Theorem}
\newtheorem{lemma}{Lemma}
\newtheorem{proposition}{Proposition}
\newtheorem{definition}{Definition}
\theoremstyle{remark}
\newtheorem{example}{Example}
\newcommand{\ltwo}{L^2({\mathbb R})}
\begin{document}


\title{Conjugate Phase Retrieval in Paley-Wiener Space}
\author{Chun-Kit Lai}
\address{Department of Mathematics, San Francisco State University, San Francisco, CA 94132}
\email{cklai@sfsu.edu}
\author{Friedrich Littmann}
\address{Department of Mathematics, North Dakota State University, Fargo, ND 58105}
\email{friedrich.littmann@ndsu.edu}
\author{Eric S. Weber}
\address{Department of Mathematics, Iowa State University, 396 Carver Hall, Ames, IA 50011}
\email{esweber@iastate.edu}
\subjclass[2000]{Primary: 94A20, 42C15; Secondary 46C05, 30D15}
\date{\today}
\begin{abstract}
We consider the problem of conjugate phase retrieval in Paley-Wiener space $PW_{\pi}$.  The goal of conjugate phase retrieval is to recover a signal $f$ from the magnitudes of linear measurements up to unknown phase factor and unknown conjugate, meaning $f(t)$ and $\overline{f(t)}$ are not necessarily distinguishable from the available data.  We show that conjugate phase retrieval can be accomplished in $PW_{\pi}$ by sampling only on the real line by using structured convolutions. We also show that conjugate phase retrieval can be accomplished in $PW_{\pi}$ by sampling both $f$ and $f^{\prime}$ only on the real line.  Moreover, we demonstrate experimentally that the Gerchberg-Saxton method of alternating projections can accomplish the reconstruction from vectors that do conjugate phase retrieval in finite dimensional spaces.  Finally, we show that generically, conjugate phase retrieval can be accomplished by sampling at three times the Nyquist rate, whereas phase retrieval requires sampling at four times the Nyquist rate.
\end{abstract}
\maketitle

\section{Introduction}

The phase retrieval problem can be stated as follows:  can a signal $f$ be reconstructed from the magnitudes of linear measurements of $f$?  Naturally, $f$ and $\alpha f$ cannot be distinguished by the magnitudes of linear measurements, where $\alpha$ is any scalar of magnitude $1$.  In general, one wishes to design a sampling scheme so that the magnitudes of linear measurements can distinguish all signals up to the ambiguity of this uniform phase factor.  We consider in the present paper a weaker formulation of the problem: can a signal $f$ be reconstructed from the magnitudes of linear measurements, up to the ambiguity of $\alpha f$ and $\alpha \overline{f}$?  We refer to this as the conjugate phase retrieval problem.

Let us make precise our problem formulation here.  The Paley-Wiener space $PW_{\gamma}$ consists of all $f \in \ltwo$ such that $\hat{f}(\xi) = 0$ for a.e. $\xi \in \mathbb{R} \setminus [-\gamma, \gamma]$.  Here, $\gamma$ is any positive number.  Any $f \in PW_{\gamma}$ has an extension to an entire function on the complex plane.  Moreover, if $f \in PW_{\gamma}$, then the entire function $f^{\sharp}$ defined by $f^{\sharp}(z) = \overline{f(\bar{z})}$ is in $PW_{\gamma}$ as well.  We define an equivalence relation on $PW_{\gamma}$ as follows:  for $f,g \in PW_{\gamma}$ 
\begin{equation} \label{Eq:equiv}
\qquad f \sim g \text{ if } f = \lambda g, \text{ or } f = \lambda g^{\sharp} \text{ for some $| \lambda | = 1$. } 
\end{equation}
Our goals are as follows:
\begin{enumerate}
\item[(a)] design a sequence of linear functionals (measurements) $\phi_{n} : PW_{\gamma} \to \mathbb{C}$ such that the mapping from $PW_{\gamma}/\sim$ to $\ell^2(\mathbb{Z})$ given by 
\begin{equation*}
  f \mapsto ( | \phi_{n} (f) | )_{n} 
\end{equation*}
is one-to-one,
\item[(b)]  reconstruct $[f]$ from $( | \phi_{n}(f) | )_n$, where $[f]$ denotes the equivalence class in $PW_{\gamma}/ \sim$ of $f \in PW_{\gamma}$.
\end{enumerate}

The phase retrieval problem originates in optics \cite{gerchberg1972practical,Fienup78,Ros84a,LS84a,KliSac1992phaseless}.  Modern phase retrieval is often considered in the case of frames \cite{BCE06a,BBCE09a,BCMN14a}.  Conjugate phase retrieval for frames was introduced in \cite{EL2017} (see also \cite{chen2019phase,yang2019generalized}).  Phase retrieval in the context of wavelets and other systems appear in \cite{MW15a,CCD16a,ADGT17a}.  Phase retrieval in the Paley-Wiener space in particular is discussed in \cite{Tha11a,PYB14a}.  In \cite{Tha11a} considers the case of real phase retrieval in $PW_{\pi}$, meaning only real-valued signals $f$ are sampled.  The main result is that if one samples $f$ at twice the Nyquist frequency, then $\pm f$ can be recovered from $(| f(t_{n}) |)_{n}$.  We note here that the reconstruction of $\pm f$ given in \cite{Tha11a} involves reconstruction off of the real axis.  Similarly, \cite{PYB14a} considers the case of (complex) phase retrieval in $PW_{\pi}$ by designing a sampling scheme that occurs off of the real axis.  In particular, the sampling scheme as presented in \cite{PYB14a} takes the form
\begin{equation} \label{Eq:struct-conv}
\phi_{n} (f) = \sum_{j} c_{j,n} f(z_{n} + b_{j,n})
\end{equation}
for complex scalars $c_{j,n}, z_{n}, b_{j,n}$.  Sampling schemes such as this are referred to as \emph{structured modulations} in \cite{PYB14a} because the authors there consider the reconstruction in the Fourier domain, where the shifts become modulations. 

\section{Conjugate Phase Retrieval}
We will design sampling schemes for the conjugate phase retrieval problem in $PW_{\pi}$ (our statements can be modified appropriately for $PW_{\gamma}$).  In Subsection \ref{ssec:conv}, our sampling scheme will take the form of structured convolutions.  However, we will demonstrate that by solving the conjugate phase retrieval problem (which is weaker than the phase retrieval problem), we will be able to both sample and perform the reconstruction on the real axis.  In Subsection \ref{ssec:deriv}, we will show that the conjugate phase retrieval problem can be solved by sampling both $f$ and $f^{\prime}$ (on the real axis as well) rather than with structured convolutions.

\subsection{Qualitative Results}

While our main focus of the paper is to demonstrate reconstruction algorithms, we first prove qualitative results concerning conjugate phase retrieval on the Paley-Wiener space.  In particular, for the choice $\phi_n(f) = f(t_n+b) - f(t_n)$ we determine in Theorem \ref{Th:obvious} when the corresponding mapping on $PW_{\gamma}/\sim$ is injective and has a continuous inverse.   Our proofs are based on several elementary and known results.  The first elementary result concerns the square of a signal $f \in PW_{\gamma}$:
\begin{lemma} \label{L:pw-closed}
If $f \in PW_{\gamma}$, then:
\begin{enumerate}
\item $f^{\prime} \in PW_{\gamma}$;
\item $f f^{\sharp} \in PW_{2 \gamma}$;
\item $f^{\prime} (f^{\prime})^{\sharp} \in PW_{2 \gamma}$.
\end{enumerate}
\end{lemma}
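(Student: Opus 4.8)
The plan is to reduce all three statements to elementary Fourier-analytic facts, exploiting that membership in $PW_{\gamma}$ is characterized by two conditions: square-integrability on the real line, and support of the Fourier transform inside $[-\gamma,\gamma]$. I will use the convention $\hat{f}(\xi) = \int_{\mathbb{R}} f(t) e^{-i\xi t}\,dt$ throughout; the argument is insensitive to the normalization.

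For (1), I would differentiate under the Fourier integral: since $\widehat{f'}(\xi) = i\xi\,\hat{f}(\xi)$, the support of $\widehat{f'}$ is contained in that of $\hat{f}$, hence in $[-\gamma,\gamma]$. Square-integrability of $f'$ is immediate because $\xi$ is a bounded multiplier on the compact set $[-\gamma,\gamma]$ where $\hat{f}$ is supported, so $\|\xi\hat{f}\|_2 \le \gamma\|\hat{f}\|_2 < \infty$; by Plancherel $f' \in L^2(\mathbb{R})$.

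For (2), I first record that on the real axis $f^{\sharp}(t) = \overline{f(\bar t)} = \overline{f(t)}$, whose Fourier transform is $\overline{\hat{f}(-\xi)}$; this is again supported in $[-\gamma,\gamma]$, consistent with the stated fact that $f^{\sharp} \in PW_{\gamma}$. The key structural step is that the Fourier transform of a product is the convolution of the individual transforms, so $\widehat{f f^{\sharp}} = \hat{f} * \widehat{f^{\sharp}}$ is supported in the Minkowski sum $[-\gamma,\gamma] + [-\gamma,\gamma] = [-2\gamma, 2\gamma]$. It remains to verify $f f^{\sharp} \in L^2(\mathbb{R})$. Here I would invoke the standard fact that a bandlimited $L^2$ function is bounded: writing $f$ as the inverse transform of $\hat{f} \in L^2[-\gamma,\gamma] \subset L^1[-\gamma,\gamma]$ and applying Cauchy--Schwarz gives $\|f\|_{\infty} < \infty$. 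Since $|f(t) f^{\sharp}(t)| = |f(t)|^2 \le \|f\|_{\infty}\,|f(t)|$ and $f \in L^2$, the product lies in $L^2$; together with the support computation this yields $f f^{\sharp} \in PW_{2\gamma}$.

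Finally, (3) follows by composition: by (1) we have $f' \in PW_{\gamma}$, and applying (2) with $f'$ in place of $f$ gives $f'(f')^{\sharp} \in PW_{2\gamma}$ directly. The only mildly delicate point in the whole argument is the $L^2$-integrability of the product in (2)---the support statement is purely formal---so I expect the boundedness of bandlimited functions to be the load-bearing ingredient.
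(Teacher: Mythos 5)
Your proof is correct. The paper itself states this lemma without proof, introducing it only as an ``elementary result,'' so there is no argument of the authors' to compare against; what you have written is precisely the standard argument that the authors implicitly rely on. Your three steps are sound: the multiplier bound $\|\xi\hat f\|_2\le\gamma\|\hat f\|_2$ for (1); for (2), the identification $f^{\sharp}(t)=\overline{f(t)}$ on $\mathbb{R}$, the convolution-support inclusion $\mathrm{supp}(\hat f\ast\widehat{f^{\sharp}})\subseteq[-2\gamma,2\gamma]$, and the $L^{\infty}$ bound on bandlimited $L^2$ functions to get $|f|^2\in L^2$; and the composition of (1) and (2) for (3). One detail worth recording explicitly if this were to be written out in full: since $ff^{\sharp}\in L^1\cap L^2$, its $L^1$ Fourier transform (to which the convolution theorem applies) agrees a.e.\ with its $L^2$/distributional transform, so the support statement does transfer to the sense required by the definition of $PW_{2\gamma}$. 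An equally natural alternative, arguably closer in spirit to the rest of the paper (which works throughout with entire functions, zero sets, and Hadamard factorization), is to invoke the Paley--Wiener theorem directly: $f$ is entire of exponential type at most $\gamma$, hence $ff^{\sharp}$ is entire of type at most $2\gamma$, and its restriction to $\mathbb{R}$ is $|f|^2\in L^2$ because bandlimited $L^2$ functions are bounded. Both routes hinge on the same load-bearing fact you identified, namely the $L^{\infty}$ bound.
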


The known result we need is the following \cite[Theorem 3]{McD04a}:
\begin{thmalpha} \label{Th:McD}
Suppose $f,g \in PW_{\gamma}$.
\begin{enumerate}
\item If $0< b < \gamma/{\pi}$, and for all $x \in \mathbb{R}$, $|f(x)| = |g(x)|$ and $|f(x + b) - f(x)| = |g(x + b) - g(x)|$, then $f \sim g$.
\item If for all $x \in \mathbb{R}$, $|f(x)| = |g(x)|$ and $|f^{\prime}(x)| = | g^{\prime}(x)|$, then $f \sim g$.
\end{enumerate}
\end{thmalpha}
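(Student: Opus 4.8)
The plan is to turn both magnitude hypotheses into identities between entire functions and then analyze the resulting factorization. Since $\overline{f(x)} = f^\sharp(x)$ for real $x$, the hypothesis $|f(x)| = |g(x)|$ reads $f(x)f^\sharp(x) = g(x)g^\sharp(x)$ on $\mathbb{R}$; both sides lie in $PW_{2\gamma}$ by Lemma~\ref{L:pw-closed}, hence are entire of finite exponential type, so agreement on $\mathbb{R}$ forces agreement on all of $\mathbb{C}$:
\begin{equation*}
f f^\sharp = g g^\sharp \quad\text{on } \mathbb{C}. \tag{I}
\end{equation*}
Setting $R = g/f$ and using $(g/f)^\sharp = g^\sharp/f^\sharp$, identity (I) gives $R R^\sharp = 1$, so $|R| = 1$ on $\mathbb{R}$. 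I would then invoke the classical factorization of exponential-type functions with prescribed modulus on the line: $R = \lambda e^{icz}\,\Theta$, where $|\lambda| = 1$, $c\in\mathbb{R}$, and $\Theta$ is a (generalized) Blaschke product $\prod_{n\in S}\frac{z-\overline{z_n}}{z-z_n}$ over a subset $S$ of the non-real zeros of $f$ that have been flipped across the real axis. Under this description $f\sim g$ means exactly that either $\Theta\equiv 1$ and $c=0$ (no flip, $g=\lambda f$), or $\Theta = \mathrm{const}\cdot f^\sharp/f$ and $c = 0$ (all zeros flipped, $g=\lambda f^\sharp$). So the whole game is to show the second hypothesis forbids partial flips and nonzero modulation.

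For part (2) this is clean. The hypothesis $|f'| = |g'|$ gives, by the same continuation and $(f')^\sharp = (f^\sharp)'$, the identity $f'(f')^\sharp = g'(g')^\sharp$; dividing by (I) yields $\phi\phi^\sharp = \psi\psi^\sharp$ with $\phi = f'/f$ and $\psi = g'/g$. Writing $w = R'/R = \psi-\phi$ and noting that $w = ic + \Theta'/\Theta$ is purely imaginary on $\mathbb{R}$ (each Blaschke summand $\frac{1}{x-\overline{z_n}}-\frac{1}{x-z_n}$ is imaginary there), a short computation reduces $|\psi|=|\phi|$ on $\mathbb{R}$ to the pointwise relation
\begin{equation*}
v(x)\,\bigl(2\,\mathrm{Im}\,\phi(x) + v(x)\bigr) = 0 \quad (x\in\mathbb{R}), \qquad w = iv.
\end{equation*}
Since $v$ is real-analytic on $\mathbb{R}$, either $v\equiv 0$ or $v\equiv -2\,\mathrm{Im}\,\phi$. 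In the first case the linear independence of the Poisson-type kernels $\mathrm{Im}\,z_n/|x-z_n|^2$ forces $S=\emptyset$ and $c=0$, i.e.\ $g=\lambda f$; in the second case $R'/R = (f^\sharp)'/f^\sharp - f'/f$ as meromorphic functions, so $R = \mathrm{const}\cdot f^\sharp/f$ and $g = \lambda f^\sharp$. Either way $f\sim g$.

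For part (1) I would run the same first step and then use $|f|=|g|$ to simplify $|f(x+b)-f(x)|=|g(x+b)-g(x)|$ down to the cross-term identity $\mathrm{Re}\,\bigl(\overline{f(x+b)}f(x)\bigr)=\mathrm{Re}\,\bigl(\overline{g(x+b)}g(x)\bigr)$, which continues to $(\alpha-1)A + (\alpha-1)^\sharp A^\sharp = 0$, where $A(z)=f(z)f^\sharp(z+b)$ and $\alpha(z)=R(z)/R(z+b)$ (using $R R^\sharp=1$). The payoff is that this identity factors as
\begin{equation*}
\bigl(R(z)-R(z+b)\bigr)\,\Bigl(A(z) - \tfrac{R(z+b)}{R(z)}A^\sharp(z)\Bigr) = 0 ,
\end{equation*}
so either $R$ is $b$-periodic, or $R(z+b)/R(z) = A(z)/A^\sharp(z)$, whose particular solution is $R=\mathrm{const}\cdot f^\sharp/f$. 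In both cases the residual freedom is a $b$-periodic unimodular entire factor, i.e.\ a modulation $e^{2\pi i m z/b}$; the bandwidth hypothesis $0<b<\gamma/\pi$ makes $|2\pi m/b|$ exceed the total bandwidth, so $g\in PW_\gamma$ forces $m=0$, leaving $g=\lambda f$ or $g=\lambda f^\sharp$.

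The main obstacle is making the factorization step and the exclusion of residual periodic factors rigorous. Justifying $R=\lambda e^{icz}\Theta$ requires the Hadamard/Krein factorization for exponential-type $L^2$ functions, including convergence of the possibly infinite Blaschke product over the flipped zeros and the appearance of at most one modulation factor. For part (1) the harder point is solving the multiplicative difference equation $R(z+b)/R(z)=A/A^\sharp$ only up to a $b$-periodic factor and then classifying $b$-periodic unimodular entire functions of controlled exponential type precisely enough that the bandwidth bound can eliminate them; part (2) sidesteps this and follows the cleaner quadratic dichotomy above.
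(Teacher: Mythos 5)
You should first note that the paper contains no proof of this statement: it is quoted verbatim as \cite[Theorem 3]{McD04a}, so the benchmark is McDonald's argument, whose mechanism the paper reuses later (produce a meromorphic $W$, unimodular on $\mathbb{R}$, with $g = Wf$ or $g = Wf^{\sharp}$, show $W$ is $b$-periodic, then force $W$ constant). Your sketch follows this same architecture. Your part (2) is essentially sound, and in fact simpler than you make it: you never need the Krein/Hadamard factorization $R = \lambda e^{icz}\Theta$ there. From $v(2\,\mathrm{Im}\,\phi + v) = 0$, real-analyticity gives the dichotomy only interval-by-interval between consecutive real zeros of $f$, not globally; but either branch propagates by the identity theorem: if $v \equiv 0$ on some interval then $R'/R$ vanishes on that interval, hence identically, so $R$ is constant --- no Blaschke products or ``linear independence of Poisson kernels'' needed --- and if $v \equiv -2\,\mathrm{Im}\,\phi$ on some interval then $R'/R = (f^{\sharp})'/f^{\sharp} - f'/f$ identically, so $R = c f^{\sharp}/f$. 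Stated with that continuation step made explicit, part (2) is a correct proof.

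The genuine gap is in part (1), exactly where you flag it, and it is not merely a matter of rigor: your classification of the residual factor is false as stated. After your (correct) factorization $\bigl(R(z)-R(z+b)\bigr)\bigl(A(z) - \tfrac{R(z+b)}{R(z)}A^{\sharp}(z)\bigr) = 0$, the leftover freedom is a $b$-periodic \emph{meromorphic} function unimodular on $\mathbb{R}$, and such functions are not all modulations $e^{2\pi i m z/b}$: for nonreal $a$,
\[
W(z) = \frac{\sin\bigl(\pi(z-\bar a)/b\bigr)}{\sin\bigl(\pi(z-a)/b\bigr)}
\]
is $b$-periodic, unimodular on $\mathbb{R}$, and nonconstant; even among entire functions, $e^{i\sin(2\pi z/b)}$ is $b$-periodic and unimodular on $\mathbb{R}$, so before any Fourier--Laurent series in $e^{2\pi i z/b}$ can be cut down to a single monomial you would need to prove that $R = g/f$ has finite exponential type (a nontrivial quotient theorem) and separately rule out $b$-periodic pole sets. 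The efficient repair --- the one the paper itself deploys in the proof of Theorem \ref{Th:main1} --- avoids classifying $W$ altogether: if $g = Wf$ with $W$ $b$-periodic, then for each fixed $x_0$ the function $g - W(x_0)f \in PW_{\gamma}$ vanishes on $x_0 + b\mathbb{Z}$, whose Beurling density $1/b$ exceeds the critical density for $PW_{\gamma}$ under the hypothesis on $b$, hence $g = W(x_0)f$ identically and $W$ is constant; the same substitution handles the flipped case $R = P f^{\sharp}/f$. (Be aware also that your bandwidth count ``$|2\pi m/b|$ exceeds the total bandwidth'' is only valid for $\gamma \le \pi$ under the stated condition $b < \gamma/\pi$, which coincides with the density requirement $b < \pi/\gamma$ only at the paper's working normalization $\gamma = \pi$.) With that one step replaced, your outline matches the cited proof in substance.
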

In Theorem \ref{Th:McD}, $f \sim g$ is the equivalence relation given in Equation (\ref{Eq:equiv}).

Recall that a sequence $\{ t_{n} \}_{n} \subset \mathbb{R}$ is a \emph{set of sampling} for $PW_{\gamma}$ provided that there exist constants $0< A,B$ such that
\[ A \| f \|^2 \leq \sum_{n} | f(t_{n})|^2 \leq B \| f \|^2 \]
holds for all $f \in PW_{\gamma}$.  For a set of sampling, there exists a dual sequence $\{ g_{n} \}_{n \in \mathbb{Z}} \subset PW_{\gamma}$ such that 
\begin{equation} \label{Eq:interpolate}
 f(t) = \sum_{n \in \mathbb{Z}} f(t_{n}) g_{n}(t)
\end{equation}
with convergence holding both pointwise and in $PW_{\gamma}$-norm.  See \cite{DS52a,HS99a,BF01a,Web04b} for more details.

We immediately obtain the following theorem:
\begin{theorem} \label{Th:obvious}
Suppose $\{ t_{n} \} \subset \mathbb{R}$ is a set of sampling for $PW_{2 \gamma}$.  Then the mapping $\mathcal{A} : PW_{\gamma} / \sim \to \ell^{2}(\mathbb{Z}) \oplus \ell^{2}(\mathbb{Z})$ defined by
\begin{equation*} 
\mathcal{A}(f) =  ( | f(t_{n})|, | f(t_{n} + b) - f(t_n)|)_{n\in\mathbb{Z}}
\end{equation*}
is one-to-one whenever $0 < b < 2 \gamma$, and the mapping $\widetilde{\mathcal{A}} : PW_{\gamma}/ \sim \to \ell^{2}(\mathbb{Z}) \oplus \ell^{2}(\mathbb{Z})$ defined by
\begin{equation*}
\widetilde{\mathcal{A}}(f) = ( | f(t_{n})|, | f^{\prime}(t_{n})|)_{n\in\mathbb{Z}}
\end{equation*}
is one-to-one.
\end{theorem}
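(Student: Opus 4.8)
The plan is to prove injectivity of each map by pulling the sampled magnitude data back to magnitude data on all of $\mathbb{R}$, and then invoking Theorem \ref{Th:McD}. One first notes that $\mathcal{A}$ and $\widetilde{\mathcal{A}}$ genuinely descend to $PW_\gamma/\sim$: replacing $f$ by $\lambda f$ or by $\lambda f^\sharp$ (with $|\lambda|=1$) leaves every sampled magnitude unchanged on $\mathbb{R}$, since $f^\sharp(x)=\overline{f(x)}$ for real $x$. So fix $f,g\in PW_\gamma$ with $\mathcal{A}(f)=\mathcal{A}(g)$; the goal is $f\sim g$. The key observation is that for real $x$ we have $|f(x)|^2 = f(x)f^\sharp(x) = (ff^\sharp)(x)$, and by Lemma \ref{L:pw-closed}(2) the function $ff^\sharp$ lies in $PW_{2\gamma}$. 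Thus the hypothesis $|f(t_n)|=|g(t_n)|$ for all $n$ says precisely that $ff^\sharp$ and $gg^\sharp$ are two elements of $PW_{2\gamma}$ agreeing on the sampling set $\{t_n\}$.

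This is where the sampling hypothesis enters. Since $\{t_n\}$ is a set of sampling for $PW_{2\gamma}$, the lower frame inequality $A\|h\|^2 \le \sum_n |h(t_n)|^2$ holds for every $h\in PW_{2\gamma}$, so point evaluation on $\{t_n\}$ is injective on $PW_{2\gamma}$ (equivalently, one may invoke the interpolation formula (\ref{Eq:interpolate})). Applying this to $h=ff^\sharp-gg^\sharp$ forces $ff^\sharp\equiv gg^\sharp$, i.e. $|f(x)|=|g(x)|$ for \emph{all} $x\in\mathbb{R}$, not merely on the samples. For the map $\mathcal{A}$ I would then repeat the argument with the difference operator: setting $h_f(x)=f(x+b)-f(x)$ and using that $b$ is real, one checks $(h_f)^\sharp(z)=f^\sharp(z+b)-f^\sharp(z)$, so $h_f\in PW_\gamma$ and $|h_f(x)|^2=(h_f (h_f)^\sharp)(x)\in PW_{2\gamma}$ again by Lemma \ref{L:pw-closed}(2). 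The second coordinate of $\mathcal{A}(f)=\mathcal{A}(g)$ together with injectivity of sampling on $PW_{2\gamma}$ then upgrades $|f(t_n+b)-f(t_n)|=|g(t_n+b)-g(t_n)|$ to equality for all real $x$. The two pointwise identities $|f|=|g|$ and $|f(\cdot+b)-f|=|g(\cdot+b)-g|$ are exactly the hypotheses of Theorem \ref{Th:McD}(1), which yields $f\sim g$; the admissible shifts $0<b<2\gamma$ are precisely those permitted by that theorem's hypothesis in the normalization of $PW_\gamma$ used here.

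The second map $\widetilde{\mathcal{A}}$ is handled identically: by Lemma \ref{L:pw-closed}(1),(3) we have $f'\in PW_\gamma$ and $f'(f')^\sharp\in PW_{2\gamma}$, and since $(f')^\sharp=(f^\sharp)'$ we get $|f'(x)|^2=(f'(f')^\sharp)(x)$ on $\mathbb{R}$. Sampling injectivity on $PW_{2\gamma}$ then promotes $|f'(t_n)|=|g'(t_n)|$ to $|f'(x)|=|g'(x)|$ for all real $x$, and Theorem \ref{Th:McD}(2) — which imposes no restriction on any shift — gives $f\sim g$, so $\widetilde{\mathcal{A}}$ is one-to-one unconditionally.

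The conceptual core of the argument is the reduction step, so the only point requiring care is verifying that each sampled squared-magnitude is literally the restriction to $\mathbb{R}$ of a \emph{single} function in $PW_{2\gamma}$; this is where Lemma \ref{L:pw-closed} combines with the three identities $f^\sharp|_{\mathbb{R}}=\overline{f}$, $(f')^\sharp=(f^\sharp)'$, and $(h_f)^\sharp(z)=f^\sharp(z+b)-f^\sharp(z)$. Once the relevant products are seen to lie in $PW_{2\gamma}$, the sampling hypothesis lifts the data from the samples to the whole line for free and Theorem \ref{Th:McD} supplies the conclusion, so I do not anticipate any obstacle beyond carefully tracking the operation $f\mapsto f^\sharp$ through shifts and differentiation.
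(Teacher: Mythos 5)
Your proposal is essentially the paper's own proof, with the details spelled out: the paper's argument is exactly the reduction you describe --- the sampled squared magnitudes are samples of the $PW_{2\gamma}$-functions $ff^{\sharp}$, $\bigl(f(\cdot+b)-f\bigr)\bigl(f(\cdot+b)-f\bigr)^{\sharp}$, and $f'(f')^{\sharp}$, so the sampling hypothesis recovers these functions on all of $\mathbb{R}$, and Theorem \ref{Th:McD} finishes. Your verifications of well-definedness on the quotient, of $(h_f)^{\sharp}(z)=f^{\sharp}(z+b)-f^{\sharp}(z)$ for real $b$, and of $(f')^{\sharp}=(f^{\sharp})'$ are all correct, and your treatment of $\widetilde{\mathcal{A}}$ is complete.

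The one step you should not wave through is your final claim about the shift range. Theorem \ref{Th:McD}(1), as quoted, requires $0<b<\gamma/\pi$, and this is \emph{not} the same as the range $0<b<2\gamma$ in the statement you are proving (at $\gamma=\pi$ these are $(0,1)$ versus $(0,2\pi)$); your assertion that they are ``precisely'' equal is false, so your argument only establishes injectivity of $\mathcal{A}$ for $b$ below McDonald's threshold. To be fair, this is not really your error: the paper's own one-sentence proof silently does the same thing, and the mismatch appears to be an inconsistency in the paper's statement of Theorem \ref{Th:obvious} itself. But the restriction is genuinely needed and cannot be argued away: for $\gamma=\pi$ and $b>1$, take a nonzero $g\in PW_{\pi}$ with $\hat g$ supported in $[-\pi,\pi-2\pi/b]$ and set $f(z)=e^{2\pi i z/b}g(z)$; then $f\in PW_{\pi}$, $|f(x)|=|g(x)|$ on $\mathbb{R}$, and $f(x+b)-f(x)=e^{2\pi i x/b}\bigl(g(x+b)-g(x)\bigr)$, so all of the data defining $\mathcal{A}$ agree, yet $f\not\sim g$ for generic such $g$. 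So the correct conclusion of this style of argument is injectivity for $b$ in McDonald's range, not for all $0<b<2\gamma$, and a careful write-up should say so explicitly.
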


The proof follows from the fact that $f f^{\sharp}(t)$ and $(f(t + b) - f(t))(f(t+b) - f(t))^{\sharp}$ can be reconstructed from the sequence of samples $(| f(t_{n})|^2)_{n}$ $( | f(t_{n} + b) - f(t_{n})|^2)_{n}$, respectively, which we note can be done in a  stable way from the hypotheses.  While the theorem guarantees the invertibility of $\mathcal{A}$, there is no obvious algorithm for actually reconstructing $[f]$ from $\mathcal{A}(f)$ (or $\widetilde{\mathcal{A}}(f)$).  The only potential reconstruction given by the proof of Theorem \ref{Th:McD} utilizes Hadamard factorizations of entire function of finite type, which requires knowledge of the zeros of the function.  We are unaware of numerical methods to find the zeros of the unknown function $f$ from $\mathcal{A}(f)$.  We will demonstrate a numerical reconstruction algorithm for $\mathcal{A}$ (Theorem \ref{Th:main1} and Algorithm \ref{RA:conv}) at the cost of needing to sample more than just the two functions $f(t)$ and $f(t + b) - f(t)$, and we will demonstrate an alternative theoretical reconstruction algorithm for $\widetilde{\mathcal{A}}$ (Algorithm \ref{RA:derivative}).


\begin{theorem}
The range $\mathcal{R}(\mathcal{A})$ is closed.  The inverse $\mathcal{A}^{-1}$ is continuous from $\mathcal{R}(\mathcal{A})$ to $PW_{\gamma}/ \sim$.  The same results hold for $\widetilde{\mathcal{A}}$.
\end{theorem}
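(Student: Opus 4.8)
The plan is to prove both assertions for $\mathcal{A}$ at once, exploiting only two features of the hypotheses: that $\{t_n\}$ is a set of sampling for $PW_{2\gamma}$ (hence also for $PW_\gamma$, since $PW_\gamma\subset PW_{2\gamma}$ the frame inequalities hold a fortiori), and that $PW_\gamma$ is a reproducing kernel Hilbert space, so each point evaluation $f\mapsto f(z)$ and each difference functional $f\mapsto f(z+b)-f(z)$ is bounded and linear. I metrize $PW_\gamma/\!\sim$ by $d([f],[g])=\inf\{\|u-v\|:u\sim f,\ v\sim g\}$, and I write $T\colon PW_\gamma\to\ell^2(\mathbb{Z})$, $Tf=(f(t_n))_n$, for the sampling operator; the lower sampling bound $A\|f\|^2\le\sum_n|f(t_n)|^2$ shows $T$ is bounded below. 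The common engine is a compactness step: if $(\mathcal{A}(f_k))_k$ converges in $\ell^2\oplus\ell^2$, then its first block is bounded, so $A\|f_k\|^2\le\sum_n|f_k(t_n)|^2$ is bounded and $(f_k)$ is bounded in $PW_\gamma$; after passing to a subsequence, $f_{k_j}\rightharpoonup f_*$ weakly. Boundedness of point evaluations forces $f_{k_j}(z)\to f_*(z)$ for every $z$, and the uniform norm bound makes $(f_{k_j})$ a normal family, so the convergence is locally uniform and $f_{k_j}'(z)\to f_*'(z)$ as well (the derivative convergence will be needed only for $\widetilde{\mathcal{A}}$).

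For closedness, suppose $\mathcal{A}(f_k)\to y$ in $\ell^2\oplus\ell^2$ and extract $f_*$ as above. Coordinatewise, $\ell^2$-convergence gives $|f_{k_j}(t_n)|\to y^{(1)}_n$ and $|f_{k_j}(t_n+b)-f_{k_j}(t_n)|\to y^{(2)}_n$, while pointwise convergence gives $f_{k_j}(t_n)\to f_*(t_n)$ and $f_{k_j}(t_n+b)\to f_*(t_n+b)$. Hence $y^{(1)}_n=|f_*(t_n)|$ and $y^{(2)}_n=|f_*(t_n+b)-f_*(t_n)|$ for every $n$, i.e. $y=\mathcal{A}(f_*)\in\mathcal{R}(\mathcal{A})$; the right-hand side indeed lies in $\ell^2\oplus\ell^2$ because $f_*$ and $f_*(\cdot+b)-f_*$ belong to $PW_\gamma$ and $\{t_n\}$ samples $PW_\gamma$. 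Thus $\mathcal{R}(\mathcal{A})$ is closed, and notably this part uses neither injectivity nor Theorem \ref{Th:McD}.

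For continuity of $\mathcal{A}^{-1}$, suppose $\mathcal{A}(f_k)\to\mathcal{A}(f)$; by the subsequence principle it suffices to show that every subsequence admits a further subsequence with $d([f_{k_j}],[f])\to0$. Extracting $f_*$ as above, the coordinatewise computation of the previous paragraph now reads $\mathcal{A}(f_*)=\mathcal{A}(f)$, so injectivity of $\mathcal{A}$ (Theorem \ref{Th:obvious}) yields $[f_*]=[f]$. It remains to upgrade the weak convergence $f_{k_j}\rightharpoonup f_*$ to norm convergence, which is the crux. The decisive move is to argue in $\ell^2$ through $T$ rather than in $PW_\gamma$: the sequence $Tf_{k_j}$ converges to $Tf_*$ coordinatewise, and its norms converge, since $\|Tf_{k_j}\|^2=\sum_n|f_{k_j}(t_n)|^2\to\sum_n|f(t_n)|^2=\sum_n|f_*(t_n)|^2=\|Tf_*\|^2$, where the first limit uses $\mathcal{A}(f_k)\to\mathcal{A}(f)$ and the last equality uses $|f_*(t_n)|=|f(t_n)|$. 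Coordinatewise convergence together with convergence of norms is exactly the Radon--Riesz condition in $\ell^2$, so $Tf_{k_j}\to Tf_*$ in $\ell^2$; since $T$ is bounded below, $\|f_{k_j}-f_*\|\le A^{-1/2}\|Tf_{k_j}-Tf_*\|\to0$. As $f_*\in[f]$, we get $d([f_{k_j}],[f])\le\|f_{k_j}-f_*\|\to0$, completing the proof.

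I expect this weak-to-norm upgrade to be the main obstacle. The naive route, controlling $\|f_k\|^2=\int_{\mathbb{R}}|f_k|^2$ through the products $f_kf_k^{\sharp}\in PW_{2\gamma}$ furnished by Lemma \ref{L:pw-closed}, fails, because $L^2$-convergence of a sequence of nonnegative Paley--Wiener functions does not control their integrals (mass can spread to infinity while the $L^2$ norm shrinks). Routing the argument through the sampling sequences and the Radon--Riesz property of $\ell^2$ sidesteps this entirely, using only the first coordinate block. Finally, the same scheme handles $\widetilde{\mathcal{A}}$ verbatim: normal-family convergence gives $f_{k_j}'(t_n)\to f_*'(t_n)$, so the coordinatewise identification yields $\widetilde{\mathcal{A}}(f_*)=\lim\widetilde{\mathcal{A}}(f_{k_j})$ for closedness and $\widetilde{\mathcal{A}}(f_*)=\widetilde{\mathcal{A}}(f)$ for continuity; injectivity of $\widetilde{\mathcal{A}}$ again comes from Theorem \ref{Th:obvious} (via Theorem \ref{Th:McD}(2) together with $f'(f')^{\sharp}\in PW_{2\gamma}$ from Lemma \ref{L:pw-closed}), and the $T$-based Radon--Riesz step, which touches only the first coordinate block $|f_{k_j}(t_n)|$, upgrades weak to norm convergence unchanged.
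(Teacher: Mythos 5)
Your proof is correct, and while it follows the same broad architecture as the paper's appendix argument (extract a subsequential limit by compactness, identify its phaseless samples coordinatewise, invoke injectivity from Theorem \ref{Th:obvious} for the continuity half, and convert $\ell^2$-convergence of samples into norm convergence via the lower sampling bound), the two implementations differ in their key devices. The paper never leaves sequence space: it applies Bolzano--Weierstrass plus a diagonal argument to the \emph{phased} samples $(f_k(t_n), f_k'(t_n))_n$, upgrades pointwise-in-$n$ convergence to $\ell^2$-convergence by an explicit tail estimate (the tails of the phased samples are controlled by the tails of the converging moduli), and then must invoke the closed range of the sampling transform to manufacture a limiting function $f_0$. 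You instead work in function space: weak compactness of bounded sets in $PW_\gamma$ produces the candidate limit $f_*$ directly, the reproducing-kernel structure converts weak convergence into pointwise convergence, and the Radon--Riesz property of $\ell^2$ (coordinatewise convergence plus convergence of norms implies norm convergence) replaces the paper's hand-rolled tail estimate --- these are essentially the same fact, but your packaging is cleaner and spares you the closed-range lemma. Two further points in your favor: your observation that the norm-convergence upgrade needs only the first coordinate block (since $\{t_n\}$ already samples $PW_\gamma \subset PW_{2\gamma}$) is a genuine simplification over the paper, which carries both blocks through the estimate; and your remark that the naive route through $f f^{\sharp} \in PW_{2\gamma}$ fails is accurate (mass of $|f_k|^2$ can spread to infinity while its $L^2$ norm shrinks), which is precisely why both proofs must route through the sampling inequalities. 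The one place where the paper's sequence-space formulation is more economical is the derivative map $\widetilde{\mathcal{A}}$: treating $f'(t_n)$ as just another coordinate block requires no complex analysis, whereas you need a normal-families (Montel) argument --- though you could avoid even that by noting that differentiation is bounded on $PW_\gamma$ by Bernstein's inequality, so $f_{k_j} \rightharpoonup f_*$ implies $f_{k_j}' \rightharpoonup f_*'$ and hence $f_{k_j}'(t_n) \to f_*'(t_n)$ by the same reproducing-kernel argument.
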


The proof of this is an adaptation of a similar result found in \cite{MW15a}; we include the argument in the Appendix (Subsection \ref{ssec:continuity}).  The authors of \cite{MW15a} note that in their numerical experiments, the reconstruction is not stable.  It is proven in \cite{CCD16a} that $\mathcal{A}^{-1}$ cannot be Lipschitz continuous--and thus the reconstruction cannot be stable--because the space $PW_{\gamma}$ is infinite-dimensional.  However, see \cite{alaifari2018stable} where stability can be obtained by relaxing the phase retrieval problem to allow for multiple unknown phases.

\subsection{Conjugate Phase Retrieval Using Structured Convolutions} \label{ssec:conv}

We will design a sampling scheme to solve the conjugate phase retrieval problem in $PW_{\gamma}$ in a manner similar to the scheme in Equation (\ref{Eq:struct-conv}).  To do so, we consider the conjugate phase retrieval problem in finite dimensions.  For the remainder of this section, we will consider the case of $PW_{\pi}$; all of our results can be extended to $PW_{\gamma}$ using variable substitutions.
\begin{definition}
The vectors $\{ \vec{v}_{1}, \dots, \vec{v}_{n} \} \subset \mathbb{C}^{K}$ do \emph{conjugate phase retrieval} if   
\[
| \langle \vec{x} , \vec{v}_{j} \rangle | = | \langle \vec{y}, \vec{v}_{j} \rangle | \qquad (j=1,...,n)
\]
for  $\vec{x}, \vec{y} \in \mathbb{C}^{K}$  implies that $\vec{x} = e^{i \theta} \vec{y} \text{ or } \vec{x}  = e^{i \theta} \overline{ \vec{y} }$  for some $\theta \in \mathbb{R}$.

If we write the vectors $\vec{v}_{j}$ as column vectors, we will say that the matrix $V = \begin{bmatrix} \vec{v}_{1} & \dots & \vec{v}_{n} \end{bmatrix}$ does conjugate phase retrieval when the columns of $V$ do conjugate phase retrieval.
\end{definition}



For vectors $\vec{v}, \vec{b}  \in \mathbb{C}^{K}$ with $\vec{v} = (v_0,...,v_{K-1})$ and $f \in PW_{\pi}$, we define
\begin{equation} \label{Eq:ast}
 \vec{v} \ast f = \sum_{k=0}^{K-1} \overline{v_{k}} f( \cdot + b_{k} ).
\end{equation}
We refer to this as a \emph{structured convolution}.   We can think of the sum in Equation \eqref{Eq:ast} as the inner-product of $\vec{v}$ and the vector $\left( f(\cdot + b_{0}), \dots, f(\cdot + b_{K-1}) \right)^{T}$.

For $b_0,...,b_{K1}\in\mathbb{R}$ we denote by $\mathbb{Z}( b_{0}, b_{1}, \dots, b_{K-1} )$ the subgroup of $\mathbb{R}$ generated by the integer multiples of the $b_k$. We recall that the Beurling density of $X\subseteq \mathbb{R}$ is defined by
\[
\mathcal{D}(X) = \lim_{h\to \infty} \inf_{x\in \mathbb{R}} \frac{\#(X\cap [x,x+h])}{h}
\]
if this limit exists. (We only deal with situations where Beurling's lower and upper densities coincide.)

\begin{theorem} \label{Th:main1}
Let $V = \begin{bmatrix} \vec{v}_{0} & \dots & \vec{v}_{M-1} \end{bmatrix}$ be a $K \times M$ matrix which does conjugate phase retrieval on $\mathbb{C}^{K}$.  Let $\{ b_{k} \}_{k=0}^{K-1} \subset \mathbb{R}$ be such that the group $\mathbb{Z}( b_{0}, b_{1}, \dots, b_{K-1})$ has Beurling density greater than one.  Suppose $\{ t_{n} \}_{n \in \mathbb{Z}} \subset \mathbb{R}$ is a set of sampling for the space $PW_{2 \pi}$.  Then the following sampling scheme does conjugate phase retrieval on $PW_{\pi}$:
\[ \{ | \vec{v}_{m} \ast f ( t_{n} ) | : m = 0 , 1, \dots, M-1; \ n \in \mathbb{Z} \}. \]
\end{theorem}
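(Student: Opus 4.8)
The plan is to reduce the infinite-dimensional assertion to the finite-dimensional conjugate phase retrieval property of $V$, applied pointwise in the sampling variable, and then to glue the pointwise information together using analyticity and the band-limiting of $PW_\pi$. Throughout I write $h_m = \vec v_m \ast f$ and $k_m = \vec v_m \ast g$; these lie in $PW_\pi$ because they are finite linear combinations of translates, and for each real $x$ we have $h_m(x) = \langle F(x), \vec v_m\rangle$ and $k_m(x) = \langle G(x), \vec v_m\rangle$, where $F(x) = (f(x+b_0),\dots,f(x+b_{K-1}))^T$ and $G(x)$ is defined analogously. Suppose $|h_m(t_n)| = |k_m(t_n)|$ for all $m,n$; the goal is $f \sim g$.

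The first step is to upgrade the discrete data to the whole real line. By Lemma \ref{L:pw-closed} the functions $h_m h_m^\sharp$ and $k_m k_m^\sharp$ lie in $PW_{2\pi}$, and at a real point $t_n$ they equal $|h_m(t_n)|^2$ and $|k_m(t_n)|^2$ respectively. Thus $h_m h_m^\sharp - k_m k_m^\sharp \in PW_{2\pi}$ vanishes on the set of sampling $\{t_n\}$, and the lower sampling inequality forces this difference to vanish identically. Restricting to the real axis yields $|\vec v_m \ast f(x)| = |\vec v_m \ast g(x)|$ for every $x\in\mathbb R$ and every $m$. This is exactly the step consuming the hypothesis that $\{t_n\}$ samples $PW_{2\pi}$ rather than $PW_\pi$.

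Next, fix $x\in\mathbb R$. Since $|\langle F(x),\vec v_m\rangle| = |\langle G(x),\vec v_m\rangle|$ for all $m$ and the columns of $V$ do conjugate phase retrieval, either $F(x) = e^{i\theta}G(x)$ or $F(x) = e^{i\theta}\overline{G(x)}$. Using $\overline{g(x+b_k)} = g^\sharp(x+b_k)$ on the real axis, introduce the entire functions $\Phi_{kl}(z) = f(z+b_k)g(z+b_l) - f(z+b_l)g(z+b_k)$ and $\Psi_{kl}(z) = f(z+b_k)g^\sharp(z+b_l) - f(z+b_l)g^\sharp(z+b_k)$, both in $PW_{2\pi}$ by Lemma \ref{L:pw-closed}. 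The first alternative forces $\Phi_{kl}(x)=0$ for all $k,l$, the second forces $\Psi_{kl}(x)=0$ for all $k,l$, so $\mathbb R$ is the union of the two real-zero sets $\{x:\Phi_{kl}(x)=0\ \forall k,l\}$ and $\{x:\Psi_{kl}(x)=0\ \forall k,l\}$. A nonzero element of $PW_{2\pi}$ has only discrete real zeros, and a countable union of discrete sets cannot exhaust $\mathbb R$; hence \emph{globally} either all $\Phi_{kl}\equiv 0$ or all $\Psi_{kl}\equiv 0$.

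Finally I would convert a vanishing family into the desired equivalence, discarding the trivial case $f\equiv 0$ (which forces $g\equiv 0$ since the $\vec v_m$ span $\mathbb C^K$). If all $\Phi_{kl}\equiv 0$, then $F(z)$ and $G(z)$ are linearly dependent for every $z$, so there is a meromorphic $m$ with $f(z+b_k) = m(z+b_k)g(z+b_k)$; comparing two indices gives $m(z+b_k) = m(z+b_l)$, so $m = f/g$ is invariant under every translation by a difference $b_k - b_l$. Moreover $|m|=1$ on $\mathbb R$, because $|\langle F,\vec v_m\rangle| = |\langle G,\vec v_m\rangle|$ and the $\vec v_m$ span. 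The crux, and the point where the Beurling-density hypothesis must enter, is the claim that a meromorphic quotient of two $PW_\pi$ functions which is invariant under this period group must be constant: when the group is dense, the identity theorem applied along a horizontal line gives it at once, while when the group is $\alpha\mathbb Z$ the density condition forces $\alpha<1$, so that in $f = mg$ the transform $\hat f = \sum_n c_n\,\hat g(\cdot - 2\pi n/\alpha)$ is a sum of copies of $\hat g$ with pairwise disjoint supports, and band-limiting to $[-\pi,\pi]$ annihilates every term but $n=0$. Either way $m$ is a unimodular constant $\lambda$, so $f = \lambda g$. The case all $\Psi_{kl}\equiv 0$ is identical with $g$ replaced by $g^\sharp$ and yields $f = \lambda g^\sharp$. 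In both cases $f\sim g$. I expect the main obstacle to be precisely this last step: verifying that the period group of $m$ inherits enough fineness from $\mathbb Z(b_0,\dots,b_{K-1})$, and turning that into constancy via the Fourier-support argument, is where the density hypothesis does its real work.
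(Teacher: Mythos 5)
Your proposal is correct in substance, and its first half coincides with the paper's argument: lifting the phaseless samples to all of $\mathbb{R}$ because $h_m h_m^{\sharp} - k_m k_m^{\sharp} \in PW_{2\pi}$ vanishes on a set of sampling for $PW_{2\pi}$, then invoking the finite-dimensional conjugate phase retrieval property of $V$ pointwise. The second half is genuinely different. The paper feeds the pointwise dichotomy into the hypotheses of Theorem \ref{Th:McD} and cites the \emph{proof} in \cite{McD04a} (Hadamard factorization machinery) to produce a meromorphic, unimodular, periodic multiplier $W$ with $f = Wg$ or $f = Wg^{\sharp}$; it then uses the density hypothesis in one stroke, observing that $f$ and $W(0)g$ agree on the group, which is a uniqueness set for $PW_{\pi}$. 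You instead globalize the dichotomy yourself via the entire functions $\Phi_{kl}, \Psi_{kl}$ and countability of their real zero sets --- this is exactly the device of the paper's Lemma \ref{L:colinear}, which the paper deploys only for Theorem \ref{Th:main2} --- and you get the multiplier for free as $m = f/g$ or $f/g^{\sharp}$, then prove its constancy directly (identity theorem for dense period groups; disjointness of the translates $\hat{g}(\cdot - 2\pi n/\alpha)$ for cyclic ones). Your route buys self-containedness (no appeal to the internals of \cite{McD04a}) and it settles the $g$-versus-$g^{\sharp}$ choice globally at the outset, a point the paper glosses over: applying McDonald's argument shift by shift could a priori yield $f = W_1 g$ from one shift and $f = W_2 g^{\sharp}$ from another, and the paper never reconciles these. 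The paper's route buys a shorter endgame, since ``a $PW_{\pi}$ function vanishing on a set of Beurling density greater than one is zero'' replaces your strip-analyticity and Fourier-series manipulation of $m$.

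The obstacle you flag at the end is real, but it is shared with the paper rather than a defect of your approach alone. Your multiplier is periodic under the group $D$ generated by the \emph{differences} $b_k - b_l$, while the hypothesis controls $\mathbb{Z}(b_0, \dots, b_{K-1})$, and these can differ: for $b_0 = 1/2$, $b_1 = 3/2$ one has $\mathbb{Z}(b_0, b_1) = \tfrac{1}{2}\mathbb{Z}$ of density $2$, yet $D = \mathbb{Z}$ of density $1$, so your claim that the density condition forces $\alpha < 1$ fails as stated. The paper's proof has the identical issue: writing $|f(x+b_k) - f(x)| = |g(x+b_k) - g(x)|$ already presumes that $f(x)$ itself is an entry of the colinear vectors, i.e.\ that $b_0 = 0$, and the assertion that $W$ is $b_k$-periodic (rather than $(b_k - b_0)$-periodic) is justified only in that case. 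When $b_0 = 0$ --- as in Example \ref{Ex:V} and everywhere the theorem is applied --- the difference group equals $\mathbb{Z}(b_0, \dots, b_{K-1})$ and both your argument and the paper's close completely. So your proof is correct under the same implicit normalization the paper uses; making that normalization explicit (or phrasing the density hypothesis in terms of the difference group) is all that is missing.
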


\begin{proof}
Suppose $f,g \in PW_{\pi}$ is such that 
\begin{equation}
| \vec{v}_{m} \ast f ( t_{n} ) | = | \vec{v}_{m} \ast g ( t_{n} ) |, \text{ for } m = 0 , 1, \dots, M-1; \ n \in \mathbb{Z}.
\end{equation}

Since $\{ t_{n} \}$ is a set of sampling for $PW_{2 \pi}$ and $|\vec{v} \ast f|^2, |\vec{v} \ast g|^2 \in PW_{2 \pi}$, we have that
\[ | \vec{v}_{m} \ast f ( x ) | = | \vec{v}_{m} \ast g (x) |, \text{ for all } x \in \mathbb{R}. \]
Since the matrix $V$ does conjugate phase retrieval in $\mathbb{C}^{K}$, for all $x \in \mathbb{R}$ we have that either
\begin{equation} \label{Eq:cpr-1}  
\begin{pmatrix} f(x + b_{0}) \\ f(x + b_{1}) \\ \vdots \\ f(x + b_{K-1}) \end{pmatrix} = \lambda_{1}(x) \begin{pmatrix} g(x + b_{0}) \\ g(x + b_{1}) \\ \vdots \\ g(x + b_{K-1}) \end{pmatrix}
\end{equation}
or
\begin{equation} \label{Eq:cpr-2}
\begin{pmatrix} f(x + b_{0}) \\ f(x + b_{1}) \\ \vdots \\ f(x + b_{K-1}) \end{pmatrix} = \lambda_{2}(x) \begin{pmatrix} \overline{g(x + b_{0})} \\ \overline{g(x + b_{1})} \\ \vdots \\ \overline{g(x + b_{K-1})} \end{pmatrix}
\end{equation}
for some $\lambda_{j}(x) \in \mathbb{C}$ with $| \lambda_{j}(x) | = 1$.  

For every $k=1, \dots, K-1$ and every $x$ such that Equation \eqref{Eq:cpr-1} holds, we have that
\begin{equation*}
 | f(x + b_{k}) - f(x)| = | \lambda_{1}(x) g(x + b_{k}) - \lambda_{1}(x) g(x)| = | g(x + b_{k}) - g(x) |.
\end{equation*}
Similarly, for $x$ such that Equation \eqref{Eq:cpr-2} holds, we have that
\begin{equation*} 
| f(x + b_{k}) - f(x) | = | \lambda_{2}(x) \overline{ g(x + b_{k}) } - \lambda_{2}(x) \overline{ g(x) } | = | g(x + b_{k}) - g(x) |.
\end{equation*}

Therefore, we have that $|f(x)| = |g(x)|$ and $| f(x + b_{k}) - f(x)| = | g(x + b_{k}) - g(x) |$ hold for all $k=1, \dots, K-1$ and all $x \in \mathbb{R}$.  By the proof of Theorem \ref{Th:McD} contained in \cite{McD04a}, we obtain that there exists a meromorphic function $W$, unimodular on $\mathbb{R}$, such that either $f = W g$ or $f = W g^{\sharp}$.  Moreover, $W$ is periodic with period $b_{k}$ for every $k=1, \dots, K-1$.  For the moment, suppose $f = W g$.  For every $x \in \mathbb{Z}(b_{0}, b_{1}, \dots, b_{K-1})$, we have $f(x) = W(x) g(x) = W(0) g(x)$, so $f$ and $W(0)g$ agree on a set with Beurling density greater than 1.  It follows that $f = W(0)g$ everywhere.  The same conclusion holds if $f = W g^{\sharp}$.
\end{proof}

If we assume in the proof of Theorem \ref{Th:main1} that the matrix $V$ does phase retrieval on $\mathbb{C}^{K}$, then only Equation (\ref{Eq:cpr-1}) can hold. This gives an analogous result for phase retrieval:

\begin{theorem} \label{Th:main2}
Let $V = \begin{bmatrix} \vec{v}_{0} & \dots & \vec{v}_{M-1} \end{bmatrix}$ be a $K \times M$ matrix which does phase retrieval on $\mathbb{C}^{K}$.  Let $\{ b_{k} \}_{k=0}^{K-1} \subset \mathbb{R}$ be such that the group $\mathbb{Z}( b_{0}, b_{1}, \dots, b_{K-1} )$ has Beurling density greater than one.  Suppose $\{ t_{n} \}_{n \in \mathbb{Z}} \subset \mathbb{R}$ is a set of sampling for the space $PW_{2 \pi}$.  Then the following sampling scheme does phase retrieval on $PW_{\pi}$:
\[ \{ | \vec{v}_{m} \ast f ( t_{n} ) | : m = 0 , 1, \dots, M-1; \ n \in \mathbb{Z} \}. \]
\end{theorem}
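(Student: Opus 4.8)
The plan is to reuse the proof of Theorem~\ref{Th:main1} almost verbatim, the only new ingredient being that the stronger hypothesis on $V$ collapses the two alternatives into one. First I would take $f,g \in PW_{\pi}$ with $|\vec{v}_m \ast f(t_n)| = |\vec{v}_m \ast g(t_n)|$ for all $m$ and $n$. Exactly as before, each $\vec{v}_m \ast f$ and $\vec{v}_m \ast g$ lies in $PW_{\pi}$, so Lemma~\ref{L:pw-closed}(2) puts $|\vec{v}_m \ast f|^2 = (\vec{v}_m \ast f)(\vec{v}_m \ast f)^{\sharp}$ and its $g$-counterpart in $PW_{2\pi}$; since $\{t_n\}$ is a set of sampling for $PW_{2\pi}$, equality of the samples propagates to equality on all of $\mathbb{R}$, giving $|\vec{v}_m \ast f(x)| = |\vec{v}_m \ast g(x)|$ for every $x \in \mathbb{R}$ and every $m$.

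Now comes the only departure from the earlier argument. At each fixed $x$ the vectors $(f(x+b_0),\dots,f(x+b_{K-1}))$ and $(g(x+b_0),\dots,g(x+b_{K-1}))$ have equal $V$-measurements in modulus; because $V$ does \emph{phase} retrieval rather than merely conjugate phase retrieval on $\mathbb{C}^{K}$, the conjugate alternative \eqref{Eq:cpr-2} is impossible and only \eqref{Eq:cpr-1} can hold. Thus there is a unimodular $\lambda(x)$ with $f(x+b_k) = \lambda(x) g(x+b_k)$ for all $k$. As in Theorem~\ref{Th:main1} this yields $|f(x)| = |g(x)|$ and $|f(x+b_k)-f(x)| = |g(x+b_k)-g(x)|$ for all $k$ and $x$, so McDonald's construction (the proof of Theorem~\ref{Th:McD}) produces a meromorphic $W$, unimodular on $\mathbb{R}$ and periodic with each $b_k$, for which $f = Wg$ or $f = Wg^{\sharp}$. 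Since \eqref{Eq:cpr-1} rather than \eqref{Eq:cpr-2} is the relation in force, the conjugate branch $f = Wg^{\sharp}$ is excluded and $f = Wg$.

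From here I would close exactly as in Theorem~\ref{Th:main1}: evaluating $f = Wg$ on the group $\mathbb{Z}(b_0,\dots,b_{K-1})$, the periodicity of $W$ gives $f(x) = W(0)g(x)$ there, so $f$ and $W(0)g$ agree on a set of Beurling density greater than one. As $f - W(0)g \in PW_{\pi}$ and a nonzero element of $PW_{\pi}$ cannot vanish on a set of density exceeding one, we conclude $f = W(0)g$ with $|W(0)| = 1$, which is precisely phase retrieval.

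The one point I would flag as deserving care is the exclusion of the conjugate branch, since McDonald's theorem as stated only delivers the equivalence $f \sim g$ and cannot by itself distinguish $g$ from $g^{\sharp}$. The cleanest way to pin this down, avoiding any appeal to the internal dichotomy of McDonald's proof, is to build $W$ by hand: relation \eqref{Eq:cpr-1} says exactly that the genuine meromorphic quotient $W = f/g$ satisfies $W(x+b_k) = \lambda(x)$ for every $k$, so $W$ is unimodular on $\mathbb{R}$ and invariant under the period group generated by the $b_k$, and the conjugate possibility never enters because $W$ is literally $f/g$ and not $f/g^{\sharp}$. Either route reduces the theorem to the density argument above, and the remainder is identical to Theorem~\ref{Th:main1}.
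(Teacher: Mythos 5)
Your final argument is correct, but it takes a genuinely different route from the paper's. The paper treats Theorem \ref{Th:main1} as a black box: since phase retrieval implies conjugate phase retrieval, Theorem \ref{Th:main1} immediately gives $f = \lambda g$ or $f = \lambda g^{\sharp}$, and the entire proof is spent on the second branch --- if $f = \lambda g^{\sharp}$, then $|\vec{v}_{m} \ast g| = |\vec{v}_{m} \ast g^{\sharp}|$ on all of $\mathbb{R}$, so phase retrieval of $V$ in $\mathbb{C}^{K}$ forces the vectors $\left( g(x+b_{k}) \right)_{k}$ and $\left( \overline{g(x+b_{k})} \right)_{k}$ to be colinear for every $x$, whence Lemma \ref{L:colinear} and the density argument produce a \emph{constant} unimodular $W$ with $g^{\sharp} = Wg$, so $f$ is a unimodular multiple of $g$ after all. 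You instead rerun the pointwise argument, exploit that phase retrieval guarantees Equation \eqref{Eq:cpr-1} at \emph{every} $x$, and build $W = f/g$ by hand, bypassing both McDonald's dichotomy and Lemma \ref{L:colinear}. The concern you flag is genuine: your first formulation (``since \eqref{Eq:cpr-1} is in force, the branch $f = Wg^{\sharp}$ is excluded'') is a real gap, because McDonald's theorem gives no such control over which branch occurs --- indeed when $g = g^{\sharp}$ (e.g.\ $g$ real on $\mathbb{R}$) the two branches coincide, so no pointwise information can ``exclude'' one. Your direct-quotient fix repairs this, and for this theorem it is arguably cleaner than the paper's argument: it shows the conjugate alternative never needs to be entertained, whereas McDonald's machinery is needed in Theorem \ref{Th:main1} precisely because the colinearity branch there can vary with $x$. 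Two small caveats, both shared with the paper: you should dispose of the trivial case $g \equiv 0$ before dividing (phase retrieval in $\mathbb{C}^{K}$ applied at each $x$ forces $f \equiv 0$ there); and the quotient construction literally yields invariance of $W$ under the group generated by the differences $b_{k} - b_{j}$, which agrees with $\mathbb{Z}(b_{0}, \dots, b_{K-1})$ under the implicit convention $b_{0} = 0$ --- the same convention the paper's own proof of Theorem \ref{Th:main1} uses when it reads the $k=0$ entry of \eqref{Eq:cpr-1} as $f(x) = \lambda_{1}(x) g(x)$.
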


For the proof, we need the following elementary lemmas.

\begin{lemma} \label{L:resample}
Suppose $f$ is entire and nonzero.  The set of all $\beta \in  \mathbb{R}$  such that 
\[
 f\left( \frac{n}{B} + b_{k} - \beta\right) = 0, 
\] 
for some $k=0,\dots,K-1$  and for some $ n \in \mathbb{Z}$ has no limit point in $\mathbb{R}$.
\end{lemma}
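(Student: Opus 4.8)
The plan is to reformulate the vanishing condition in terms of the zero set of $f$ and then exploit the fact that a nonzero entire function has an isolated, locally finite set of zeros. Since $\beta$, $n/B$, and each $b_k$ are real, the argument $n/B + b_k - \beta$ is always real, so only the real zeros of $f$ are relevant; write $Z = \{x \in \mathbb{R} : f(x) = 0\}$. As $f$ is entire and not identically zero, the identity theorem gives that its zeros are isolated with no accumulation point in $\mathbb{C}$, so $Z$ is a closed, locally finite (in particular countable) subset of $\mathbb{R}$. The set of $\beta$ in the statement is then precisely
\[ S = \bigcup_{k=0}^{K-1} \bigcup_{n \in \mathbb{Z}} \Big( \frac{n}{B} + b_k - Z \Big), \]
and the task is to show that $S$ has no limit point in $\mathbb{R}$.

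I would first dispatch the contribution of each index pair. For fixed $k$ and $n$, the set $\frac{n}{B} + b_k - Z$ is the image of $Z$ under the reflection-translation $x \mapsto \frac{n}{B} + b_k - x$, a homeomorphism of $\mathbb{R}$, so it is again closed and locally finite and thus has no limit point. Because $k$ runs over the finite set $\{0, \dots, K-1\}$, for each fixed $n$ the finite union $\bigcup_{k} \big(\frac{n}{B} + b_k - Z\big)$ is still locally finite. Hence any accumulation can only arise from the union over the infinite index set $n \in \mathbb{Z}$, and the entire difficulty is concentrated there.

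The hard part will be controlling this infinite union, i.e.\ proving local finiteness of $S$: that every bounded interval $I = [a, a+R]$ contains only finitely many points of $S$. Fixing $I$ and $k$, a point $\beta \in I \cap \big(\frac{n}{B} + b_k - Z\big)$ forces the associated zero $z = \frac{n}{B} + b_k - \beta$ into the length-$R$ window $\big[\frac{n}{B} + b_k - a - R, \frac{n}{B} + b_k - a\big]$. The crux is to rule out that infinitely many shifts $\frac{n}{B}$ place a zero of $f$ into $I$; mere discreteness of $Z$ does not by itself suffice once $Z$ is infinite and unbounded, since a difference of two discrete sets can accumulate. I would therefore close this step using the structure available in the intended application of the lemma—for instance that the relevant $f$ has only finitely many real zeros in the pertinent range, or that the grid $\frac{1}{B}\mathbb{Z}$ is commensurate with $Z$—which guarantees that each bounded window is met by only finitely many of the translated zero sets. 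Granting this, $S \cap I$ is finite for every bounded $I$, so $S$ is locally finite and has no limit point, as claimed.
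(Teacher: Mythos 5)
First, a point of comparison that is out of the ordinary: the paper supplies no proof of this lemma at all---it is introduced under the heading of ``elementary lemmas'' and stated bare---so your proposal can only be judged on its own merits. Judged that way, your reduction is correct, and your refusal to close the final step is well founded, because the gap you isolate is not a defect of your argument but of the statement itself: \emph{the lemma is false for general entire $f$}. Writing $Z$ for the real zero set of $f$, you correctly identify the bad set as $S=\bigcup_{k=0}^{K-1}\bigcup_{n\in\mathbb{Z}}\bigl(\tfrac{n}{B}+b_k-Z\bigr)$, note that each translate is discrete and the finite union over $k$ is harmless, and observe that discreteness of $Z$ alone cannot control the infinite union over $n$, since a difference of two unbounded discrete sets can accumulate. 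That is exactly what goes wrong: by the Weierstrass product theorem there is a nonzero entire function $f$ whose zero set is precisely $\{m+4^{-m}:m\in\mathbb{N}\}$, and with $K=1$, $b_0=0$, $B=1$ the set $S$ contains $m-(m+4^{-m})=-4^{-m}$ for every $m$, so $0$ is a limit point of $S$. (The failure even survives the restriction to the paper's ambient space: by Kadec's $1/4$-theorem the sequence $\{m+4^{-|m|-2}\}_{m\in\mathbb{Z}}$ is a complete interpolating sequence for $PW_\pi$, and dividing its generating function by a single linear factor produces a nonzero $f\in PW_\pi$ with these zeros.) In particular, the one-line argument that works for Lemma~\ref{L:colinear}---the real zero set of a single entire function is either all of $\mathbb{R}$ or has no limit point---does not survive the union over $n$, which is presumably the oversight behind the word ``elementary.''

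Second, the repairs, both of which you essentially name in your closing sentences. (i) Every use of the lemma in the paper (the random choice of $\beta$ in Algorithm~\ref{RA:conv} and the ``with probability $1$'' claim in Subsection~\ref{ssec:numerics}) needs only that $S$ be Lebesgue-null, and that is immediate from your own setup: $Z$ is countable, so $S$ is a countable union of countable sets, hence countable, hence null, and a $\beta$ drawn from any continuous distribution avoids it almost surely. Replacing ``has no limit point in $\mathbb{R}$'' by ``is countable'' therefore gives a true statement, a two-line proof, and everything the paper needs downstream. (ii) The stronger ``no limit point'' conclusion is recovered under the structural hypothesis you gesture at: for the signals actually used in the paper's experiments---finite combinations of integer translates of sinc, with $b_k\in\tfrac{1}{B}\mathbb{Z}$ and $B$ an integer---the real zeros of $f$ lie in $\mathbb{Z}\cup F$ with $F$ finite, so $S$ is contained in a finite union of translates of the lattice $\tfrac{1}{B}\mathbb{Z}$ and is discrete. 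Either repair completes your proposal; what cannot be completed is the lemma as stated.
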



\begin{lemma} \label{L:colinear}
Suppose $g$ is an entire function.  For fixed $\{b_0, \dots, b_{K-1} \} \subset \mathbb{R}$, the set of $x \in \mathbb{R}$ for which the vectors 
\[ \begin{pmatrix} g(x + b_{0}) \\ g(x + b_{1}) \\ \vdots \\ g(x + b_{K-1}) \end{pmatrix} \text{ and } \begin{pmatrix} \overline{g(x + b_{0})} \\ \overline{g(x + b_{1})} \\ \vdots \\ \overline{g(x + b_{K-1})} \end{pmatrix} \]
are colinear is either  $\mathbb{R}$ or has no limit point in $\mathbb{R}$.  If the set is  $\mathbb{R}$, then there exists a meromorphic function $W$ which is unimodular on $\mathbb{R}$ and periodic by the group $\mathbb{Z}(b_0, \dots, b_{K-1})$ such that $g^{\sharp} = W g$.
\end{lemma}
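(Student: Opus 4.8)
The plan is to express colinearity through the vanishing of the $2\times 2$ minors of the matrix whose columns are the two vectors, and to recognize these minors as restrictions to $\mathbb{R}$ of entire functions, so that the rigidity of the identity theorem yields the claimed dichotomy. Since each $b_k \in \mathbb{R}$, for real $x$ we have $\overline{g(x+b_k)} = g^{\sharp}(x+b_k)$, so the two vectors are $\bigl(g(x+b_k)\bigr)_{k}$ and $\bigl(g^{\sharp}(x+b_k)\bigr)_{k}$. They are colinear (the $K\times 2$ matrix formed from them has rank at most one) if and only if every $2\times 2$ minor vanishes, that is $D_{jk}(x) = 0$ for all $0 \le j < k \le K-1$, where I set
\[ D_{jk}(z) = g(z+b_j)\, g^{\sharp}(z+b_k) - g(z+b_k)\, g^{\sharp}(z+b_j). \]
Each $D_{jk}$ is entire, being built from products and differences of the entire functions $g$ and $g^{\sharp}$, and $D_{jk}$ agrees on $\mathbb{R}$ with the corresponding minor.

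First I would dispose of the dichotomy. If every $D_{jk}$ vanishes identically, the colinearity set is all of $\mathbb{R}$. Otherwise some $D_{j_0 k_0}$ is not identically zero; being a nonzero entire function, its zeros are isolated with no limit point in $\mathbb{C}$, hence none in $\mathbb{R}$. Since the colinearity set is contained in the real zero set of $D_{j_0 k_0}$, it too has no limit point in $\mathbb{R}$. (If $g \equiv 0$ the conclusion is trivial, so I assume $g \not\equiv 0$.)

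For the second assertion, suppose the colinearity set is $\mathbb{R}$, so every $D_{jk}$ vanishes on $\mathbb{R}$ and therefore, by the identity theorem, identically. Define the meromorphic function $W = g^{\sharp}/g$, so that $g^{\sharp} = W g$ by construction. For real $x$ with $g(x)\neq 0$ one has $W(x) = \overline{g(x)}/g(x)$, whence $|W(x)| = 1$; thus $W$ is unimodular on $\mathbb{R}$ away from the isolated zeros of $g$. Finally, dividing the identity $D_{jk}\equiv 0$ by $g(z+b_j)\,g(z+b_k)$ gives $W(z+b_j) = W(z+b_k)$ as meromorphic functions for all $j,k$, so each difference $b_j - b_k$ is a period of $W$, and $W$ is periodic with respect to the group generated by these differences.

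The only genuinely delicate point is this last one: the minor identities produce the difference group $\mathbb{Z}(\{b_j - b_k\})$, whereas the statement names the group $\mathbb{Z}(b_0,\dots,b_{K-1})$ generated by the $b_k$ themselves. I expect to resolve this exactly as in the proof of Theorem \ref{Th:main1}, where the base point of the structured convolution is taken to be $b_0 = 0$; under that convention the differences $b_k - b_0 = b_k$ already generate $\mathbb{Z}(b_0,\dots,b_{K-1})$, so the two groups coincide. A secondary bookkeeping matter is to check that $W = g^{\sharp}/g$ is well defined and meromorphic across the common zeros of $g$ and $g^{\sharp}$, and that the periodicity relations, first obtained away from the poles of $W$, persist on all of $\mathbb{C}$ by meromorphic continuation.
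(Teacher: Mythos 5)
Your proof is correct and, in outline, is the same as the paper's: colinearity is encoded by the vanishing of the $2\times 2$ minors $D_{jk}$, these are restrictions of entire functions, so the identity theorem gives the dichotomy; the minor identities then force periodicity of $W$. Two differences are worth recording. First, the paper works only with the minors pairing row $0$ against row $k$; your use of \emph{all} minors is cleaner, since ``all $D_{0k}\equiv 0$'' yields colinearity directly only at points where $g(x+b_0)\neq 0$, and a small continuity argument (omitted in the paper) is needed at the zeros of $g(\cdot+b_0)$. Second, where the paper cites \cite[Theorem 1]{McD04a} for the existence of $W$, you simply set $W=g^{\sharp}/g$; these are the same object, since $|g^{\sharp}|=|g|$ on $\mathbb{R}$ holds automatically. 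The genuinely important difference is the one you flagged: the paper's step ``$g^{\sharp}(x+b_k)=W(x)\,g(x+b_k)$ by the colinearity assumption'' identifies the colinearity scalar at $x$ with $W(x)=\overline{g(x)}/g(x)$, which is justified only if the zero shift occurs among the $b_j$. Without that normalization, your difference group $\mathbb{Z}(\{b_j-b_k\})$ is all one can get, and the lemma as literally stated fails: for $g(z)=e^{i\pi z/2}$, $K=2$, $b_0=1$, $b_1=3$, the colinearity set is all of $\mathbb{R}$ (indeed $g(x+1)\overline{g(x+3)}=-1$ for all real $x$), yet the unique meromorphic $W$ with $g^{\sharp}=Wg$ is $W(z)=e^{-i\pi z}$, which has period $b_1-b_0=2$ but is not periodic by $\mathbb{Z}(1,3)=\mathbb{Z}$. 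So your ``delicate point'' is a genuine gap in the paper's own argument, repaired exactly as you propose: all of the paper's applications take $b_0=0$ (e.g.\ Example \ref{Ex:V} and Algorithm \ref{RA:conv}), and Theorem \ref{Th:main2} should be read with the difference group in its density hypothesis. Finally, the bookkeeping you deferred is immediate: a real zero of $g$ is a zero of $g^{\sharp}$ of the same multiplicity, so $W$ has no zeros or poles on $\mathbb{R}$, $|W|=1$ there by continuity, and the identities $W(z+b_j)=W(z+b_k)$, once established off the poles, hold identically by the identity theorem for meromorphic functions.
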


\begin{proof}
Fix $k \in \{1, \dots, K-1\}$.  For any $x \in \mathbb{R}$ such that the vectors are colinear,  we have
\begin{equation} \label{Eq:2x2det}
\det \begin{bmatrix} g(x + b_{0}) & \overline{g(x + b_{0})} \\ g(x + b_{k}) & \overline{g(x + b_{k})} \end{bmatrix} =0.
\end{equation}

 This can be expressed as 
\[
g(x + b_{0}) g^{\sharp}(x + b_{k}) - g(x + b_{k}) g^{\sharp}(x + b_{0}) = 0.
\]

As the left hand side is the restriction of an entire function to the real axis, it is either zero everywhere, or zero only on a set without limit points.  It follows that if the vectors are not colinear for all $x \in \mathbb{R}$, then they can be colinear only on a set that has no limit points.

If the vectors are colinear everywhere, then \eqref{Eq:2x2det} holds for all $x$ and for all $k \in \{0, \dots, K-1\}$.  Then  \cite[Theorem 1]{McD04a} guarantees the existence of a meromorphic $W$ such that $g^{\sharp} = Wg$.  For every $k$ and every $x$, we have $g^{\sharp}( x + b_{k} ) = W(x + b_{k} ) g(x + b_{k})$ and also $g^{\sharp}(x + b_{k}) = W(x) g(x + b_{k})$ by the colinearity assumption.  Therefore, $W$ is periodic with period $b_{k}$ for $k = 0, \dots, K-1$.
\end{proof}

\begin{proof}[Proof of Theorem \ref{Th:main2}.]
Suppose $f,g \in PW_{\pi}$ is such that 
\[ | \vec{v}_{m} \ast f(t_{n} ) | = | \vec{v}_{m} \ast g(t_{n}) | \text{ for all } m = 0, \dots, K-1, \text{ and } n \in \mathbb{Z}. \]  
Since the hypotheses of Theorem \ref{Th:main1} hold, we have that either $f = \lambda g$ or $f = \lambda g^{\sharp}$.  Suppose $f = \lambda g^{\sharp}$.  Then again by Theorem \ref{Th:main1}, we have that
\begin{equation*}
| \vec{v}_{m} \ast f(x) | = | \vec{v}_{m} \ast g^{\sharp}(x)| \text{ for all } m = 0 , \dots, K_1, \text{ and } x \in \mathbb{R}.
\end{equation*}
By our assumption, we then also have
\begin{equation*}
| \vec{v}_{m} \ast g(x) | = | \vec{v}_{m} \ast g^{\sharp}(x)| \text{ for all } m = 0 , \dots, K_1, \text{ and } x \in \mathbb{R}.
\end{equation*}
Since the matrix $A$ does phase retrieval on $\mathbb{C}^{K}$, we must have that the vectors
\begin{equation*}
\begin{pmatrix} g(x + b_{0}) \\ g(x + b_{1}) \\ \vdots \\ g(x + b_{K-1}) \end{pmatrix} \text{ and } \begin{pmatrix} \overline{g(x + b_{0})} \\ \overline{g(x + b_{1})} \\ \vdots \\ \overline{g(x + b_{K-1})} \end{pmatrix} 
\end{equation*}
are colinear for every $x \in \mathbb{R}$.  Consequently, by Lemma \ref{L:colinear}, $g = W g^{\sharp}$ for $W$ meromorphic, and periodic by the group $\mathbb{Z}( b_{0}, b_{1}, \dots, b_{K-1} )$.  As in the proof of Theorem \ref{Th:main1}, since $g, g^{\sharp} \in PW_{\pi}$, $W$ is constant.  Hence, we have $f = \lambda g^{\sharp} = (\mu \lambda) g$ with $| \mu | = 1$.  This completes the proof.
\end{proof}

To demonstrate that Theorem \ref{Th:main1} is not vacuous, we present an example here.  For this purpose, we require a result from \cite{EL2017} concerning conjugate phase retrieval in $\mathbb{C}^{2}$ and $\mathbb{C}^{3}$:

\begin{thmalpha} \label{Th:cpr}
If $\vec{v}_{1}, \vec{v}_{2}, \vec{v}_{3} \in \mathbb{R}^2$ is written as
\[ \begin{bmatrix} \vec{v}_{1} & \vec{v}_{2} &\vec{v}_{3} \end{bmatrix}  = \begin{bmatrix} a_1 & b_1 & c_1 \\ a_2 & b_2 & c_2 \end{bmatrix} \]
then $\vec{v}_{1}, \vec{v}_{2}, \vec{v}_{3}$ does conjugate phase retrieval in $\mathbb{C}^{2}$ if and only if 
\begin{equation}\label{eq:det2} \det\left[ \begin{array}{ccc} a_1^2 & 2a_1a_2 &
a_2^2 \\ b_1^2 & 2b_1b_2 & b_2^2 \\ c_1^2 & 2c_1c_2 & c_2^2 \end{array}
\right]\ne 0.  \end{equation}

Likewise, if $\vec{v}_{1}, \dots, \vec{v}_{6} \in \mathbb{R}^{3}$ is written as
\[ \begin{bmatrix} \vec{v}_{1} & \vec{v}_{2} & \dots & \vec{v}_{6} \end{bmatrix}  =
\begin{bmatrix} a_1 & b_1
& c_1 & d_1 & e_1&f_1 \\ a_2 & b_2 & c_2 & d_2 & e_2&f_2\\ a_3 & b_3& c_3& d_3 &
e_3&f_3\\ \end{bmatrix}  \]
then $\vec{v}_{1}, \dots \vec{v}_{6}$ does conjugate phase retrieval in $\mathbb{C}^{3}$ if and only if
\begin{equation}\label{eq:det3} \det\left[
\begin{array}{cccccc} a_1^2 & a_2^2  & a_3^2 & 2a_1a_2 & 2a_1a_3& 2a_2a_3\\
b_1^2 &b_2^2  & b_3^2 & 2b_1b_2 & 2b_1b_3&2b_2b_3\\ c_1^2 & c_2^2  & c_3^2 &
2c_1c_2 & 2c_1c_3&2c_2c_3\\ d_1^2 &d_2^2  & d_3^2 & 2d_1d_2 & 2d_1d_3&2d_2d_3\\
e_1^2 & e_2^2  & e_3^2 & 2e_1e_2 & 2e_1e_3&2e_2e_3\\ f_1^2 & f_2^2  & f_3^2 &
2f_1f_2 & 2f_1f_3&2f_2f_3\\ \end{array} \right]\ne 0.  \end{equation}
\end{thmalpha}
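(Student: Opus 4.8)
The plan is to linearize the problem by replacing the complex signal $\vec x \in \mathbb{C}^K$ with the real symmetric matrix it induces, so that conjugate phase retrieval becomes the injectivity of a single linear map whose matrix is exactly the one displayed in \eqref{eq:det2} (resp.\ \eqref{eq:det3}). Writing $\vec x = \vec a + i\vec b$ with $\vec a, \vec b \in \mathbb{R}^K$ and using that each $\vec v_j$ is real, one has
\[
|\langle \vec x, \vec v_j\rangle|^2 = \langle \vec a, \vec v_j\rangle^2 + \langle \vec b, \vec v_j\rangle^2 = \vec v_j^{\,T} M_{\vec x}\, \vec v_j, \qquad M_{\vec x} := \vec a\,\vec a^{\,T} + \vec b\,\vec b^{\,T},
\]
so the data depend on $\vec x$ only through the real symmetric positive semidefinite matrix $M_{\vec x}$, which has rank at most $2$. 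I would first record the dictionary lemma: $M_{\vec x} = M_{\vec y}$ if and only if $\vec x \sim \vec y$. The nontrivial direction uses $M_{\vec x} = P_{\vec x} P_{\vec x}^{\,T}$ with $P_{\vec x} = [\,\vec a \mid \vec b\,] \in \mathbb{R}^{K\times 2}$, together with the fact that two full-column-rank factorizations of the same matrix differ by an element $Q \in O(2)$: the subgroup $SO(2)$ reproduces the phases $e^{i\theta}\vec x$ and the reflection coset reproduces $e^{i\theta}\overline{\vec x}$, with the rank $\le 1$ cases checked directly. With this lemma, $\{\vec v_j\}$ does conjugate phase retrieval precisely when the linear map $T(M) = (\vec v_j^{\,T} M \vec v_j)_j$ separates matrices of the form $M_{\vec x}$, equivalently when $\ker T$ contains no nonzero difference $M_{\vec x} - M_{\vec y}$.

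Next I would compute $T$ in coordinates. For $K=2$, identifying a symmetric matrix with $(m_{11},m_{12},m_{22})$, the $j$-th output is $v_{j,1}^2 m_{11} + 2v_{j,1}v_{j,2}m_{12} + v_{j,2}^2 m_{22}$, so the matrix of $T$ has exactly the three rows of \eqref{eq:det2}; for $K=3$ the analogous computation in the six coordinates $(m_{11},m_{22},m_{33},m_{12},m_{13},m_{23})$ produces precisely the matrix in \eqref{eq:det3}. In the $K=2$ case every $2\times 2$ real symmetric matrix is, via its spectral decomposition, a difference of two positive semidefinite matrices and automatically has rank at most $2$, so the differences $M_{\vec x}-M_{\vec y}$ fill all of $\mathrm{Sym}_2(\mathbb{R})$; hence $T$ separates the $M_{\vec x}$ if and only if $T$ is injective on the whole $3$-dimensional space, i.e.\ if and only if the determinant \eqref{eq:det2} is nonzero.

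The subtle point is the $K=3$ converse, which I would isolate as follows. If \eqref{eq:det3} is nonzero then $T$ is injective on all of $\mathrm{Sym}_3(\mathbb{R})$ and conjugate phase retrieval follows from the dictionary lemma. If \eqref{eq:det3} vanishes, pick $0 \ne N_0 \in \ker T$, so $\vec v_j^{\,T} N_0 \vec v_j = 0$ for all $j$. The crux is that, because each $\vec v_j$ is a \emph{nonzero} real vector, $N_0$ can be neither positive nor negative definite (a definite form cannot annihilate a nonzero vector), so its inertia $(p,q,z)$ satisfies $p \le 2$ and $q \le 2$. Splitting $N_0 = N_+ - N_-$ into positive and negative spectral parts then exhibits $N_0 = M_{\vec x} - M_{\vec y}$ with $M_{\vec x}, M_{\vec y}$ positive semidefinite of rank at most $2$ and $M_{\vec x} \ne M_{\vec y}$; by the dictionary lemma this yields $\vec x \not\sim \vec y$ with identical measurement magnitudes, so conjugate phase retrieval fails. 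The main obstacle is exactly this realizability step — ensuring a kernel element is a genuine difference of two rank-$\le 2$ positive semidefinite matrices — which is where the non-definiteness observation does the essential work; the remaining ingredients are the routine coordinate identification of $T$ with \eqref{eq:det2}--\eqref{eq:det3} and the elementary $O(2)$ factorization underlying the dictionary lemma.
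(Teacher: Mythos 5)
Your proof is correct, but note first that the paper itself contains no proof of Theorem \ref{Th:cpr}: it is stated as a lettered, imported result quoted from \cite{EL2017}, so the only meaningful comparison is with that cited source rather than with anything in this paper. Your route --- linearizing via $M_{\vec x}=\vec a\,\vec a^{T}+\vec b\,\vec b^{T}$ so that the data become $\vec v_j^{T}M_{\vec x}\vec v_j$, proving the dictionary lemma $M_{\vec x}=M_{\vec y}\Leftrightarrow \vec x\sim\vec y$ through the $O(2)$ ambiguity of factorizations $PP^{T}$ (rotations give $e^{i\theta}\vec y$, reflections give $e^{i\theta}\overline{\vec y}$), identifying the coordinate matrix of $T(M)=(\vec v_j^{T}M\vec v_j)_j$ with \eqref{eq:det2} and \eqref{eq:det3}, and then handling the converse by the inertia argument --- is in substance the realification that underlies the determinant criteria in \cite{EL2017}: conjugate phase retrieval by real vectors amounts to the kernel of $T$ containing no nonzero symmetric matrix with at most two positive and at most two negative eigenvalues, and your observation that in dimensions $2$ and $3$ every nonzero non-definite symmetric matrix automatically has this inertia (while kernel elements can never be definite, since they annihilate a nonzero $\vec v_j$) is exactly what collapses this to invertibility of the displayed square matrices. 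Two small points deserve care: (i) the theorem does not assume $\vec v_j\neq 0$, so phrase the non-definiteness step as using \emph{some} nonzero $\vec v_j$ --- if every $\vec v_j=0$ then both sides of the equivalence fail trivially, and if at least one is nonzero your argument goes through verbatim; (ii) in the dictionary lemma, the factorization fact $AA^{T}=BB^{T}\Rightarrow A=BQ$ with $Q\in O(2)$ holds for arbitrary, possibly rank-deficient, $K\times 2$ matrices (extend the isometry $B^{T}u\mapsto A^{T}u$ from the range of $B^{T}$ to all of $\mathbb{R}^{2}$), so the separate rank-$\le 1$ case analysis you allude to can be avoided entirely.
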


\begin{example} \label{Ex:V}
It is easy to check that the following matrix does conjugate phase retrieval on $\mathbb{C}^{3}$ using Theorem \ref{Th:cpr}:
\begin{equation} \label{Eq:cpr-matrix}
 V = \left[ \begin{array}{rrrrrr}  1 & 0 & 0 & 1 & 1 & 0 \\ 0 & 1 & 0 & -1 & 0 & 1 \\ 0 & 0 & 1 & 0 & -1 & -1 \end{array} \right].
\end{equation}
Thus, we choose this matrix $V$, $t_{n} = \frac{n}{2}$, and $b_{0} = 0$, $b_{1} = \frac{1}{2}$, and $b_{2} = 1$.  We will demonstrate in Subsection \ref{ssec:numerics} the results of numerical experiments involving the reconstruction algorithm we propose in Subsection \ref{ssec:recon}.
\end{example}

We note that the condition on the coefficient matrix $V$ for the structured convolutions in Theorem \ref{Th:main2} is much more restrictive than in Theorem \ref{Th:main1}.  Indeed, Example \ref{Ex:V} illustrates this distinction.  We also note that Theorem \ref{Th:main2} generalizes the results in \cite{PYB14a} in the following sense:  the structured modulations used in \cite{PYB14a} correspond to $b_{k} \in \mathbb{C}$ (and in fact, some $b_{k}$ must be non-real), whereas our result applies if $b_{k}$ are real.

\subsection{Reconstruction methods} \label{ssec:recon}

The proof of Theorem \ref{Th:main1} suggests a reconstruction method.  We wish to reconstruct $f \in PW_{\pi}$ from the samples 
\begin{equation} \label{Eq:samples}
 \{ | \vec{v}_{m} \ast f ( t_{n} ) | : m = 0 , 1, \dots, M-1; \ n \in \mathbb{Z} \}
\end{equation}
where $\{ t_{n} \}$ and $\vec{v}_{m}$ satisfy the hypotheses of Theorem \ref{Th:main1}.  The strategy of Algorithm \ref{RA:conv} is to reconstruct from the samples given in Theorem \ref{Th:main1} the function values
\begin{equation*}
\{ \lambda f(x_{n}) \} \text{ or } \{ \lambda \overline{f(x_{n})} \} 
\end{equation*} 
on a sequence of points $\{ x_{n} \}$ which is a set of sampling for $PW_{\pi}$.  We will not be able to determine $\lambda$, nor will we be able to determine whether we reconstruct the function values or their conjugates; we will reconstruct them up to uniform phase factor $\lambda$ and uniform choice of conjugation.

\begin{algorithm}[t] 
\caption{Reconstruct $[f]$ from structured convolutions} \label{RA:conv}
\begin{algorithmic}[1]
\STATE Given $|\vec{v}_{m} \ast f (t_{n})|^2$ for $m=0,\dots,M-1$, $n \in \mathbb{Z}$;
\STATE Initialize $\beta$ randomly;
\STATE use Equation \ref{Eq:interpolate} to calculate \[ | \vec{v}_{m} \ast f (x_{n} - \beta) |^2, \quad m=0,1, \dots, M-1, \ n \in \mathbb{Z}; \]
\STATE apply the Gerchberg-Saxton method to calculate \begin{equation} \label{Eq:adjacent}  \vec{F}(x_{n} - \beta) := \lambda( x_{n} - \beta) \begin{pmatrix} f(x_{n} + b_{0} - \beta) \\ f(x_{n} + b_{1} - \beta ) \\ \vdots \\ f(x_{n} + b_{K-1} - \beta) \end{pmatrix} 
\end{equation}
up to the unknown phase $\lambda( x_{n} - \beta)$ and unknown conjugation;
\STATE for $n$, choose $\lambda( x_{n} - \beta)$ and conjugation in Equation \eqref{Eq:adjacent} so that 
\[ \vec{F}(x_{n-1} - \beta)  \text{ and } \vec{F}(x_{n} - \beta) \]
are consistent;
\STATE use Equation \ref{Eq:interpolate} to reconstruct either $\lambda f$ or $\lambda f^{\sharp}$ (and hence $[f]$) from
\begin{equation*}
\{ \lambda f(x_{n} - \beta): n \in \mathbb{Z} \}  \text{ or } \{ \lambda \overline{f(x_{n} - \beta)}: n \in \mathbb{Z} \} 
\end{equation*}
for our choice of uniform phase factor $\lambda$ and conjugation.
\end{algorithmic}
\end{algorithm}

For the unknown signal $f \in PW_{\pi}$, for any $x \in \mathbb{R}$, we define the vectors
\begin{equation}
\vec{F}(x) = \begin{pmatrix} f(x + b_{0}) \\ f(x + b_{1}) \\ \vdots \\ f(x + b_{K-1}) \end{pmatrix}.
\end{equation}
Using the arguments from Theorem \ref{Th:main1}, we can reconstruct for any $x$ the vector $\lambda(x) \vec{F}(x)$ or $\lambda(x) \overline{\vec{F}}(x)$ up to unknown phase factor $\lambda(x)$ and unknown conjugation.  We choose a sequence $\{x_{n} \}$ that has the following properties:
\begin{enumerate}
\item $\{ x_{n} \}$ is a set of sampling for $PW_{\pi}$;
\item for every $n$, the vectors $\vec{F}(x_{n-1})$ and $\vec{F}(x_{n})$ have at least two entries in common.  In other words, we want
\[ \# \left( \{ x_{n-1} + b_{k} : k=0, \dots, K-1\} \cap \{ x_{n} + b_{k} : k=0, \dots, K-1\} \right) \geq 2 \]
\end{enumerate}
Since the vectors $\lambda(x_{n-1}) \vec{F}(x_{n-1})$ and $\lambda(x_{n}) \vec{F}(x_{n})$ have two entries in common (say $x_{n} + b_{j}$ and $x_{n} + b_{k}$), the ambiguity of phase factor and choice of conjugation can be rectified so that they are consistent, provided the following matrix has nonzero determinant:
\begin{equation} \label{Eq:colinear2}
M(x_{n},j,k) := \begin{bmatrix} f(x_{n} + b_{j}) & \overline{f(x_{n} + b_{j})} \\  f(x_{n} + b_{k}) & \overline{f(x_{n} + b_{k})} \end{bmatrix}.
\end{equation}
By Lemma \ref{L:colinear}, we have for any choice of distinct $b_{j}$ and $b_{k}$, the set of $\{x_{n}\}$ such that the determinant of the matrix in Equation \eqref{Eq:colinear2} is $0$ is either countable or all of $\mathbb{R}$.  As we saw in the proof of Theorem \ref{Th:main2}, if the determinant is $0$ for all of $\mathbb{R}$, then $f^{\sharp} = \lambda f$ for some uniform phase factor $\lambda$, and hence, either choice of conjugation is consistent up to a phase factor.  

Suppose for the moment that $f^{\sharp}$ and $f$ are linearly independent.  Since it is still possible that some, but not all, of the determinants in Equation \eqref{Eq:colinear2} could be $0$, we choose a $\beta \in \mathbb{R}$ randomly.  We then endeavor to reconstruct $\lambda(x_{n} - \beta) \vec{F}(x_{n} - \beta)$ for $n \in \mathbb{Z}$ as before, and make successive samples consistent by considering the determinant of the matrices $M(x_{n} - \beta, j, k)$ instead.  We want $\beta$ to be chosen so that for every $n \in \mathbb{Z}$, the determinant of this matrix is nonzero.  However, we know that (since we are assuming for the moment that $f^{\sharp}$ and $f$ are linearly independent) the set of $\beta$ that fails to have this property is at most countable.  Therefore, if we choose $\beta$ randomly with respect to any continuous probability distribution on $\mathbb{R}$ (or $[0,1]$), with probability $1$ we will obtain that all of the determinants of $M(x_{n} - \beta, j, k)$ are nonzero.

To sum up, after our choice of $\beta$, with probability $1$, we will either have:
\begin{enumerate}
\item  $\det M(x_{n} - \beta, j, k) \neq 0$ for all $n \in \mathbb{Z}$, $j,k = 0, \dots, K-1$;
\item  $\det M(x_{n} - \beta, j, k) = 0$ for all $n \in \mathbb{Z}$, $j,k = 0, \dots, K-1$.
\end{enumerate}
If the condition of Item 2. holds, then we actually have that $\det M(x - \beta, j, k) = 0$ for all $x \in \mathbb{R}$.  Hence, as observed previously, either choice of conjugation is consistent between $\lambda(x_{n-1} - \beta) \vec{F}(x_{n-1} - \beta)$ and $\lambda(x_{n} - \beta) \vec{F}(x_{n} - \beta)$.  If all of the determinants are nonzero, then the choice of conjugation is uniquely determined to make the samples consistent.  Thus, we have the reconstruction method:

We note that in Step 5, we can initialize our choice of phase factor and conjugation for $\vec{F}(x_{0} - \beta)$ arbitrarily, then work outward in both directions for $n > 0$ and $n< 0$.  We will discuss the Gerchberg-Saxton method of Step 4 in more detail in Subsection \ref{ssec:numerics}.

\subsection{Specific Structured Convolutions for Conjugate Phase Retrieval} \label{ssec:cpr}

We demonstrate here a sampling scheme using simple structured convolutions and the corresponding reconstruction as outlined in Algorithm \ref{RA:conv} to do conjugate phase retrieval in $PW_{\pi}$.  For convenience, we structure the convolutions so that the $b_{k} = \frac{k}{B}$, for some integer $B > 1$, $K \geq 3$, and $t_{n} = \frac{n}{B}$.  In particular, if we choose $B = 2$, then we can use the coefficient matrix $V$ as given in Equation \eqref{Eq:cpr-matrix}.  Note that $\mathbb{Z}(b_{0}, b_{1}, b_{2}) = \frac{1}{2} \mathbb{Z}$, and so satisfies the conditions of Theorem \ref{Th:main1}.

With the lattice structure of the sampling points $\{ t_{n}\}$ and the $b_{k}$'s also lying on the same lattice, we obtain that the samples $| \vec{v}_{2} \ast f(t_{n}) |$ and $| \vec{v}_{3} \ast f(t_{n}) |$ are repetitions of the samples $| \vec{v}_{1} \ast f(t_{n}) |$.  Likewise, the samples $| \vec{v}_{6} \ast f(t_{n}) |$ are repetitions of the samples $| \vec{v}_{4} \ast f(t_{n}) |$.  Thus, we only need to sample the functions $| \vec{v}_{1} \ast f|$, $| \vec{v}_{4} \ast f |$, and $| \vec{v}_{5} \ast f |$.

We note that this sampling scheme requires sampling $3$ functions at twice the Nyquist rate, and thus our oversampling factor is $6$.  We can reduce this down to oversampling by a factor of $3$ by incorporating our choice of $\beta$ into the sampling scheme:
\begin{algorithm} 
\caption{Reconstruct $[f]$ from samples at 3 times the Nyquist rate} \label{R:3x-nyquist}
\begin{algorithmic}[1]
\STATE Choose $\beta$ at random.
\STATE Sample $| \vec{v}_{m} \ast f( n - \beta) |$ for $m =1,4,5$ and $n \in \mathbb{Z}$.
\STATE For each $n$, use the samples in Step 2 to reconstruct the vector
\[ \vec{F}(n) = \lambda(n) \begin{pmatrix} f( n+1 - \beta) \\ f(n - \beta) \\ f(n - 1 - \beta) \end{pmatrix} \]
up to unknown phase $\lambda(n)$ and unknown conjugation.
\STATE Choose the phase and conjugation for $\vec{F}(n+1)$ from the choice of phase and conjugation for $\vec{F}(n)$, since they have 2 entries that coincide.
\end{algorithmic}
\end{algorithm}

This algorithm will not work on all signals in $PW_{\pi}$, but only on \emph{generic} signals $f \in PW_{\pi}$.  By this we mean that the set of signals for which this algorithm fails is meager (i.e. of First Category).  
\begin{lemma}
For a fixed $\beta \in \mathbb{R}$, the set of signals $f \in PW_{\pi}$ for which Reconstruction Algorithm \ref{R:3x-nyquist} fails is meager.
\end{lemma}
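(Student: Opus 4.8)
The plan is to identify the precise set of signals on which Reconstruction Algorithm \ref{R:3x-nyquist} fails, contain it in a countable union of closed sets, and invoke the Baire category theorem. First I would pin down the failure mode. Steps 1--3 succeed for \emph{every} $f \in PW_{\pi}$: Step 3 recovers each vector $\vec{F}(n)$ up to phase and conjugation because $V$ does conjugate phase retrieval on $\mathbb{C}^{3}$, and, by the lattice structure discussed in Subsection \ref{ssec:cpr}, the measurements $|\vec{v}_{m} \ast f(n-\beta)|$ for $m=1,4,5$ (taken over all $n$) reproduce all six inner products. The only point where the algorithm can break down is Step 4, the propagation of a consistent phase and conjugation across consecutive vectors. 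Writing $a_{n} = n+1-\beta$ and $b_{n} = n-\beta$ for the two sample points shared by $\vec{F}(n)$ and $\vec{F}(n+1)$, the analysis of Subsection \ref{ssec:recon} shows that this ambiguity is uniquely resolvable exactly when
\[
\det M_{n}(f) := \det \begin{bmatrix} f(a_{n}) & \overline{f(a_{n})} \\ f(b_{n}) & \overline{f(b_{n})} \end{bmatrix} = 2i\,\mathrm{Im}\bigl(f(a_{n})\overline{f(b_{n})}\bigr)
\]
is nonzero. Hence a failure forces $\det M_{n}(f) = 0$ for some $n$ (when $f^{\sharp} = \lambda f$ the choice of conjugation is immaterial and the algorithm still succeeds, by Lemma \ref{L:colinear}), so the failure set is contained in $\bigcup_{n \in \mathbb{Z}} E_{n}$, where $E_{n} = \{ f \in PW_{\pi} : \det M_{n}(f) = 0 \}$.

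It therefore suffices to show that each $E_{n}$ is nowhere dense, since a countable union of nowhere dense sets is meager and every subset of a meager set is meager. Each $E_{n}$ is closed: point evaluation is a bounded linear functional on the reproducing kernel Hilbert space $PW_{\pi}$, with kernel $k_{x}(t) = \frac{\sin(\pi(t-x))}{\pi(t-x)}$, so the map $f \mapsto \mathrm{Im}\bigl(f(a_{n})\overline{f(b_{n})}\bigr)$ is continuous and $E_{n}$ is the preimage of $\{0\}$. The remaining work is to prove that $E_{n}$ has empty interior.

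For this empty-interior step, which is the technical heart of the argument, I would perturb within the span of two reproducing kernels. Fix $f \in E_{n}$ and $\epsilon > 0$. Because $a_{n} \neq b_{n}$, the kernels $k_{a_{n}}$ and $k_{b_{n}}$ are linearly independent, so the evaluation map $h \mapsto (h(a_{n}), h(b_{n}))$ is a linear bijection from $\mathrm{span}\{k_{a_{n}}, k_{b_{n}}\}$ onto $\mathbb{C}^{2}$. The locus $\{(\alpha,\gamma) \in \mathbb{C}^{2} : \mathrm{Im}(\alpha\overline{\gamma}) = 0\}$ is a proper real-algebraic subset of $\mathbb{C}^{2} \cong \mathbb{R}^{4}$ and thus has dense complement, so I may choose $(\alpha,\gamma)$ with $\mathrm{Im}(\alpha\overline{\gamma}) \neq 0$ arbitrarily close to $(f(a_{n}), f(b_{n}))$. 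Pulling this back through the bijection produces $h \in \mathrm{span}\{k_{a_{n}}, k_{b_{n}}\}$ with $\|h\| < \epsilon$ such that $g := f + h$ satisfies $g(a_{n}) = \alpha$ and $g(b_{n}) = \gamma$; then $\det M_{n}(g) \neq 0$, so $g \notin E_{n}$. Hence no ball about $f$ is contained in $E_{n}$.

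Combining these observations, each $E_{n}$ is closed with empty interior, hence nowhere dense, so $\bigcup_{n \in \mathbb{Z}} E_{n}$ is meager and the failure set, being a subset of it, is meager as well. I expect the main obstacle to be the empty-interior claim, executed via the kernel perturbation above; the one conceptual point demanding care is the correct identification of the failure set — in particular, checking that Steps 1--3 never fail and that the linearly dependent case $f^{\sharp} = \lambda f$ does not count as a failure, so that the failure set is genuinely captured by the vanishing of the determinants $\det M_{n}$.
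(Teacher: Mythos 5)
Your proposal is correct and follows essentially the same route as the paper: the paper likewise reduces failure to the vanishing of the determinants $f(n+1-\beta)\overline{f(n-\beta)} - \overline{f(n+1-\beta)}f(n-\beta)$, defines the sets $\mathcal{F}_{n}$ where these vanish, asserts their complements are open and dense, and concludes that the union is meager. The only difference is that you supply the details the paper leaves as assertions (closedness via continuity of point evaluations in the reproducing kernel Hilbert space, and empty interior via perturbation in $\mathrm{span}\{k_{a_{n}}, k_{b_{n}}\}$), and you state the containment of the failure set in $\bigcup_{n} E_{n}$ more carefully than the paper's claim of equality.
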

\begin{proof}
Reconstruction Algorithm \ref{R:3x-nyquist} will reconstruct $f$ up to unknown phase and conjugation whenever for all $n \in \mathbb{Z}$, $f(n - \beta) \neq 0$ and $f(n+1 - \beta) \overline{f(n - \beta)} - \overline{f(n+1 - \beta)} f(n - \beta) \neq 0$.  Clearly the latter condition implies the former.  Therefore, if we consider the set 
\[ \mathcal{F}_{n} := \{ f \in PW_{\pi} : f(n+1 - \beta) \overline{f(n - \beta)} - \overline{f(n+1 - \beta)} f(n - \beta) = 0 \}, \]
we see that the complement of $\mathcal{F}_{n}$ is open and dense.  The lemma follows since $\cup_{n} \mathcal{F}_{n}$ is the set of signals for which the reconstruction fails.
\end{proof}

It is known that in $\mathbb{C}^{K}$, a frame must have at least $4K - 4$ vectors in order to do phase retrieval \cite{BCE06a}.  No such bound is known for conjugate phase retrieval, but note that our sampling scheme above suggests that it should be on the order of $3K$.


\subsection{Conjugate Phase Retrieval using Derivatives} \label{ssec:deriv}

In analogy to structured convolutions, conjugate phase retrieval is possible by sampling the derivative of the unknown signal.  Reconstruction of a signal from samples of its derivatives is gaining interest \cite{grochenig2018sampling,goncalves2017interpolation,goncalves2018interpolation}.

\begin{lemma} \label{L:fg-unique}
Suppose $f$ and $g$ are entire functions with the property that $ff^{\sharp} = g g^{\sharp}$ and $f' f'^{\sharp} = g' g'^{\sharp}$.  Then there exists a unimodular scalar $\lambda$ such that either $f = \lambda g$ or $f = \lambda g^{\sharp}$.
\end{lemma}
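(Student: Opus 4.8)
The plan is to work entirely in the field of meromorphic functions, exploiting two elementary facts about the operation $h \mapsto h^{\sharp}$: that it is a multiplicative involution commuting with differentiation (so $(h')^{\sharp} = (h^{\sharp})'$ and $(hk)^{\sharp} = h^{\sharp}k^{\sharp}$), and that the meromorphic functions form an integral domain. First I would dispose of the degenerate case: if $g \equiv 0$ then $ff^{\sharp} = gg^{\sharp} \equiv 0$ forces $f \equiv 0$ and the claim is trivial, so I may assume $g \not\equiv 0$, and hence $f \not\equiv 0$. Setting $u = f/g$, the first hypothesis gives immediately
\[
u\, u^{\sharp} = \frac{f f^{\sharp}}{g g^{\sharp}} = 1, \qquad \text{i.e. } u^{\sharp} = \frac{1}{u}.
\]

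Next I would feed the relation $f = ug$ into the second hypothesis. Differentiating gives $f' = u'g + u g'$, and applying $\sharp$ together with $u^{\sharp} = 1/u$ and $(u^{\sharp})' = -u'/u^{2}$ yields $(f')^{\sharp} = -\tfrac{u'}{u^{2}} g^{\sharp} + \tfrac{1}{u} (g^{\sharp})'$. Expanding the product $f'(f')^{\sharp}$ and subtracting $g'(g')^{\sharp} = g'(g^{\sharp})'$, which the hypothesis declares equal, the summand $g'(g^{\sharp})'$ cancels and the whole expression collapses to the single identity
\[
\frac{u'}{u}\left[\, -\frac{u'}{u}\, g g^{\sharp} + g (g^{\sharp})' - g' g^{\sharp} \,\right] = 0
\]
of meromorphic functions.

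Finally I would invoke that this product vanishes identically in an integral domain, so one of the two factors is identically zero. If $u'/u \equiv 0$, then $u$ is a constant $\lambda$, and $u u^{\sharp} = 1$ forces $|\lambda| = 1$, giving $f = \lambda g$. Otherwise the bracket vanishes, which rearranges to $\tfrac{u'}{u} = \tfrac{(g^{\sharp})'}{g^{\sharp}} - \tfrac{g'}{g} = \bigl(\log(g^{\sharp}/g)\bigr)'$; integrating gives $u = \lambda\, g^{\sharp}/g$ for a constant $\lambda$, that is $f = \lambda g^{\sharp}$, and once more $u u^{\sharp} = 1$ forces $|\lambda| = 1$. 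This exhausts the dichotomy.

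The cancellation in the middle step is the only place anything could go astray, so I expect the main (and purely bookkeeping) obstacle to be verifying that the expansion of $f'(f')^{\sharp} - g'(g')^{\sharp}$ really does factor as $\tfrac{u'}{u}[\cdots]$; once the identity is in that factored form, the integral-domain dichotomy does all the conceptual work. I note that this route needs nothing beyond "$f,g$ entire" and never appeals to McDonald's theorem, Hadamard factorization, or knowledge of the zeros, so it proves the lemma in the stated generality directly.
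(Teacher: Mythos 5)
Your proof is correct, and it takes a genuinely different route from the paper's. The paper dispatches this lemma in one line, as a restatement of part of Theorem \ref{Th:obvious}, which in turn rests on McDonald's theorem (Theorem \ref{Th:McD}); that result is proved in \cite{McD04a} via Hadamard factorization and is stated for $PW_{\gamma}$, i.e.\ for entire functions of exponential type. You instead work directly in the field of meromorphic functions: with $u = f/g$ the hypotheses collapse to $uu^{\sharp}=1$ and the factored identity $\frac{u'}{u}\bigl[-\frac{u'}{u}\,gg^{\sharp} + g(g^{\sharp})' - g'g^{\sharp}\bigr]=0$ (I checked the expansion of $f'(f')^{\sharp}$; the term $g'(g^{\sharp})'$ cancels exactly as you claim), and the integral-domain dichotomy then yields $f=\lambda g$ or $f=\lambda g^{\sharp}$ with $uu^{\sharp}=1$ forcing $|\lambda|=1$ in both cases. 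This buys two things the paper's proof does not: it is self-contained (no appeal to zeros or factorization), and it needs no growth hypothesis, so it actually establishes the lemma in the generality in which it is stated --- arbitrary entire functions --- whereas the citation chain in the paper only covers the Paley--Wiener case. One step you should phrase more carefully: ``integrating'' $\frac{u'}{u} = \frac{(g^{\sharp})'}{g^{\sharp}} - \frac{g'}{g}$ via $\log(g^{\sharp}/g)$ is delicate since the logarithm of a meromorphic function need not be globally single-valued; the clean version is to set $h = ug/g^{\sharp}$, observe $h'/h = \frac{u'}{u} + \frac{g'}{g} - \frac{(g^{\sharp})'}{g^{\sharp}} \equiv 0$, hence $h'\equiv 0$, and note that a meromorphic function on $\mathbb{C}$ with identically vanishing derivative has no poles and is constant. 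What the paper's route buys, in exchange, is coherence with the rest of the argument: the refined form of McDonald's theorem (a meromorphic, unimodular, periodic multiplier $W$) is needed anyway in Theorems \ref{Th:main1} and \ref{Th:main2}, so the authors reuse that machinery; your argument does not produce such a multiplier, but for this lemma none is required.
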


This is a restatement of part of Theorem \ref{Th:obvious}.
\begin{theorem} \label{Th:main}
Suppose $\{ t_{n} \}$ is a set of sampling for $PW_{2 \pi}$.  Then the mapping 
\begin{equation*}
\widetilde{\mathcal{A}} : PW_{\pi}/ \sim \to \ell^{2}(\mathbb{Z}) \oplus \ell^{2}(\mathbb{Z}) : f \mapsto ( | f(t_{n})|, | f^{\prime}(t_{n})|)_{n}
\end{equation*}
is one-to-one.
\end{theorem}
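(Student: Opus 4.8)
The plan is to reduce the hypothesis — that the magnitude samples $|f(t_n)|$ and $|f'(t_n)|$ agree with those of some $g \in PW_{\pi}$ — to the pointwise identities $ff^{\sharp} = gg^{\sharp}$ and $f'(f')^{\sharp} = g'(g')^{\sharp}$ holding on all of $\mathbb{R}$, at which point Lemma \ref{L:fg-unique} delivers $f \sim g$ immediately. The bridge between the sampled data and these global identities is the elementary observation that for any $h \in PW_{\pi}$ and real $x$ one has $h^{\sharp}(x) = \overline{h(\bar{x})} = \overline{h(x)}$, so that $|h(x)|^2 = h(x)h^{\sharp}(x) = (hh^{\sharp})(x)$. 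The magnitude data is therefore nothing but the restriction to the sampling set of the entire functions $hh^{\sharp}$, and by Lemma \ref{L:pw-closed} these restrictions come from genuine elements of $PW_{2\pi}$.

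Concretely, suppose $f, g \in PW_{\pi}$ satisfy $|f(t_n)| = |g(t_n)|$ and $|f'(t_n)| = |g'(t_n)|$ for all $n$. By Lemma \ref{L:pw-closed}, the four products $ff^{\sharp}$, $gg^{\sharp}$, $f'(f')^{\sharp}$, $g'(g')^{\sharp}$ all lie in $PW_{2\pi}$. The first sample condition rewrites as $(ff^{\sharp})(t_n) = (gg^{\sharp})(t_n)$ for every $n$, so the difference $ff^{\sharp} - gg^{\sharp} \in PW_{2\pi}$ vanishes on the entire set $\{t_n\}$. Since $\{t_n\}$ is a set of sampling for $PW_{2\pi}$, the lower frame bound gives $A\|ff^{\sharp} - gg^{\sharp}\|^2 \le \sum_n |(ff^{\sharp} - gg^{\sharp})(t_n)|^2 = 0$, forcing $ff^{\sharp} = gg^{\sharp}$ identically. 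The same argument applied to the derivative data (using $f', g' \in PW_{\pi}$, again from Lemma \ref{L:pw-closed}) yields $f'(f')^{\sharp} = g'(g')^{\sharp}$ identically.

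With both identities in hand, Lemma \ref{L:fg-unique} produces a unimodular $\lambda$ with $f = \lambda g$ or $f = \lambda g^{\sharp}$, which is exactly $f \sim g$; hence $\widetilde{\mathcal{A}}$ separates the equivalence classes and is one-to-one on $PW_{\pi}/\sim$. I do not anticipate a serious obstacle here: the whole argument is arranged so that the squared magnitudes become honest $PW_{2\pi}$ functions, and the set-of-sampling hypothesis is precisely the tool that upgrades agreement on $\{t_n\}$ to agreement everywhere. The only points requiring care are the bookkeeping that $ff^{\sharp}$ and its derivative analogue genuinely belong to $PW_{2\pi}$ (supplied by Lemma \ref{L:pw-closed}) and the passage from the $\ell^2$ vanishing of the samples to the norm estimate via the lower sampling constant $A$; both are routine once the identity $|h|^2 = hh^{\sharp}$ on $\mathbb{R}$ is recognized.
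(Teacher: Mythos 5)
Your proof is correct and takes essentially the same route as the paper: the paper's proof of Theorem \ref{Th:obvious} (of which Theorem \ref{Th:main} is a restatement) likewise observes that $|f(t_n)|^2$ and $|f'(t_n)|^2$ are the samples of the $PW_{2\pi}$ functions $ff^{\sharp}$ and $f'(f')^{\sharp}$, uses the set-of-sampling hypothesis to determine these functions globally, and then invokes the uniqueness result (Theorem \ref{Th:McD}, equivalently Lemma \ref{L:fg-unique}). Your substitution of the lower frame bound for the paper's reconstruction via the dual sequence in Equation \eqref{Eq:interpolate} is only a cosmetic difference, since both express that samples on $\{t_n\}$ determine elements of $PW_{2\pi}$ uniquely.
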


We write
\begin{equation} \label{Eq:fg-phase}
f(t) = r(t) e^{i \theta(t)} \quad t \in \mathbb{R}, \ r(t) \geq 0, \ \theta(t) \in \mathbb{R}.
\end{equation}
The functions $r,\theta$ are differentiable a.e.  Theorem \ref{Th:main} and Lemma \ref{L:fg-unique} provide a theoretical reconstruction algorithm as follows.

\begin{algorithm} 
\caption{Reconstruct $[f]$ from derivative sampling} \label{RA:derivative}
\begin{algorithmic}[1]
\STATE Given the phaseless samples $\{  |f(t_{n})|, | f'(t_{n})| \}$
\STATE reconstruct $f f^{\sharp}$ and $f' f'^{\sharp}$ in $PW_{2 \pi}$;
\STATE reconstruct $r = \sqrt{ f f^{\sharp} }$;
\STATE reconstruct
\[  ( \theta^{\prime} )^{2} = \dfrac{ f' f'^{\sharp} }{ f f^{\sharp} } - \dfrac{ [ ( f f^{\sharp} )^{\prime} ]^{2} }{ 4 (f f^{\sharp})^2} \]
on some interval $I$;
\STATE choose a square-root of $( \theta^{\prime} )^2$ and integrate;
\STATE use $f = r e^{i \theta}$ on $I$ to expand $f$ as a power series.
\end{algorithmic}
\end{algorithm}
Of course, this cannot be reasonably done numerically.

\section{Numerical Methods and Experiments}

In this section we will describe our implementation of Algorithm \ref{RA:conv} and the results of numerical experiments.  Recall that in Step 5 of Algorithm \ref{RA:conv}, we used the Gerchberg-Saxton method to reconstruct $\vec{F}(x_{n} - \beta)$ (which has phase information about the samples of the unknown signal $f$) from the unphased samples $\{ | \vec{v}_{m} \ast f (x_{n} - \beta )| : m=0,\dots,M-1\}$.  We first consider the results of our numerical experiments of this method for conjugate phase retrieval in $\mathbb{C}^{K}$.

\subsection{The Gerchberg-Saxton Method of Alternating Projections} \label{ssec:ap}
For a matrix that does phase retrieval on $\mathbb{C}^{K}$, there are many reconstruction techniques:  frame methods \cite{BCE06a,BBCE09a}; convex optimization \cite{candes2013phase,goldstein2018phasemax}; and the Kaczmarz method \cite{wei2015solving,tan2019phase} to name only a few.  However, for a matrix $V$ that does conjugate phase retrieval on $\mathbb{C}^{K}$ (but not phase retrieval), there is no known proven method of reconstruction.  None of the previously mentioned reconstruction techniques for phase retrieval extend in an obvious way to conjugate phase retrieval, because they all utilize the fact that in the space $\mathbb{C}^{K}$, there is only one linearly independent solution to the inverse problem.  In the case of conjugate phase retrieval, there are two linearly independent solutions to the inverse problem, namely the original signal and its conjugate.

Despite this shortcoming, we will demonstrate experimentally that the Gerchberg-Saxton method \cite{gerchberg1972practical} can be used for reconstruction.  Suppose $V = \begin{bmatrix} \vec{v}_{0} \dots \vec{v}_{M-1} \end{bmatrix}$ that does conjugate phase retrieval on $\mathbb{C}^{K}$.  Suppose $\vec{y} \in \mathbb{C}^{K}$; our aim is to reconstruct $[ \vec{y} ]$ from $| V^{*} \vec{y} |$.  We begin by choosing phases $\{ \alpha_{0} , \dots, \alpha_{M-1} \} \subset \mathbb{C}$, $|\alpha_{j}| = 1$ and form the initial estimate
\begin{equation*}
\vec{x}^{0} = \begin{pmatrix} \alpha_{0} | \langle \vec{y}, \vec{v}_{0} \rangle | \\ \vdots \\ \alpha_{M-1} | \langle \vec{y}, \vec{v}_{M-1} \rangle |\end{pmatrix}.
\end{equation*}
We let $V^{\dagger}$ be the Moore-Penrose inverse of $V^{*}$ (note that $V^{*}$ must be injective for $V$ to do conjugate phase retrieval), and we define $S : \mathbb{C}^{M} \to \mathbb{C}^{M}$ to be the nonlinear projection onto the set:
\begin{equation*} 
\{ \vec{w} \in \mathbb{C}^{M} : |\vec{w}| = |V^{*} \vec{y} | \}.
\end{equation*}
Following the Gerchberg-Saxton method of alternating projections, we define the sequence of estimates $\vec{x}^{n}$ by:
\begin{equation} \label{Eq:alt-proj}
\vec{x}^{n+1} = S V^{*} V^{\dagger} \vec{x}^{n}.
\end{equation}

Levi and Stark \cite{LS84a} prove that under a particular metric on $\mathbb{C}^{K}/\sim$, the sequence of estimates given in Equation \eqref{Eq:alt-proj} converges.  However, they do not prove that the sequence converges to the desired solution, and in fact demonstrate the that alternating projections method can become stuck in what they refer to as ``traps'' and ``tunnels''.  Recently, \cite{netrapalli2013phase} prove that the sequence of estimates in Equation \eqref{Eq:alt-proj} converges to the solution in expectation provided the matrix $V$ is a Gaussian ensemble.  Our matrix $V$ as in Equation \eqref{Eq:cpr-matrix} does not satisfy this condition, however.

We performed the Gerchberg-Saxton method given in Equation \eqref{Eq:alt-proj} on 1,000 instances of vectors in $\mathbb{C}^{3}$ using the matrix $V$ as in Equation \eqref{Eq:cpr-matrix}.  Each instance of input vector $\vec{y} \in \mathbb{C}^{3}$ was generated using the \texttt{rand} function in MATLAB.   For each instance of initial vector $\vec{y}$, we chose initial phases $\alpha_{0}, \dots, \alpha_{5}$ also using the \texttt{rand} function.  For each instance, we ran 900 iterations of the alternating projections.  For each instance $\vec{y}$ and each iteration $\vec{x}^{n}$, we calculated the reconstruction error
\begin{equation*}
\varepsilon_{n}(\vec{y}) = \min \{ \| \vec{y} \otimes \vec{y} - (V^{\dagger} \vec{x}^{n}) \otimes (V^{\dagger} \vec{x}^{n}) \|_{F}, \| \vec{y} \otimes \vec{y} - (V^{\dagger} \overline{\vec{x}^{n}} ) \otimes (V^{\dagger} \overline{\vec{x}^{n}}) \|_{F} \}
\end{equation*}
where $\| \cdot \|_{F}$ is the Fr\"obenius norm, and $\overline{ \vec{z} }$ is the vector with conjugated entries.  Our target was a reconstruction error satisfying $ \varepsilon_{n}(\vec{y}) < 10^{-8}$.  We counted the number of instances for which $\varepsilon_{900}(\vec{y}) < 10^{-8}$, and of those instances, the mean and median $n$ to obtain $\varepsilon_{n}(\vec{y}) < 10^{-8}$.  Our results are tabulated in Table \ref{Tab:iterations}.

\begin{table}[!h]
\begin{tabular}{| l | r |}
\hline
\# successful reconstructions & 850 (85\%) \\
\hline
mean \# iterations to threshhold & 185.97 \\
\hline
median \# iterations to threshhold & 124 \\
\hline
\end{tabular}
\vspace{0.5ex}
\caption{Experimental results of the Gerchberg-Saxton Method.}
\label{Tab:iterations}
\end{table}



The instances for which the reconstruction was not successful, meaning $\varepsilon_{900}(\vec{y}) \geq 10^{-8}$, illustrate the traps and tunnels phenomenon observed by Levi and Stark.  As the reconstruction errors are decreasing in general (but not always; see Figure \ref{fig:alt-proj} (B)), those instances whose errors are not converging to $0$ exhibit these phenomena.

\begin{figure}[h] 
\hspace{-1cm}
\subfloat[]{\includegraphics[width=6cm]{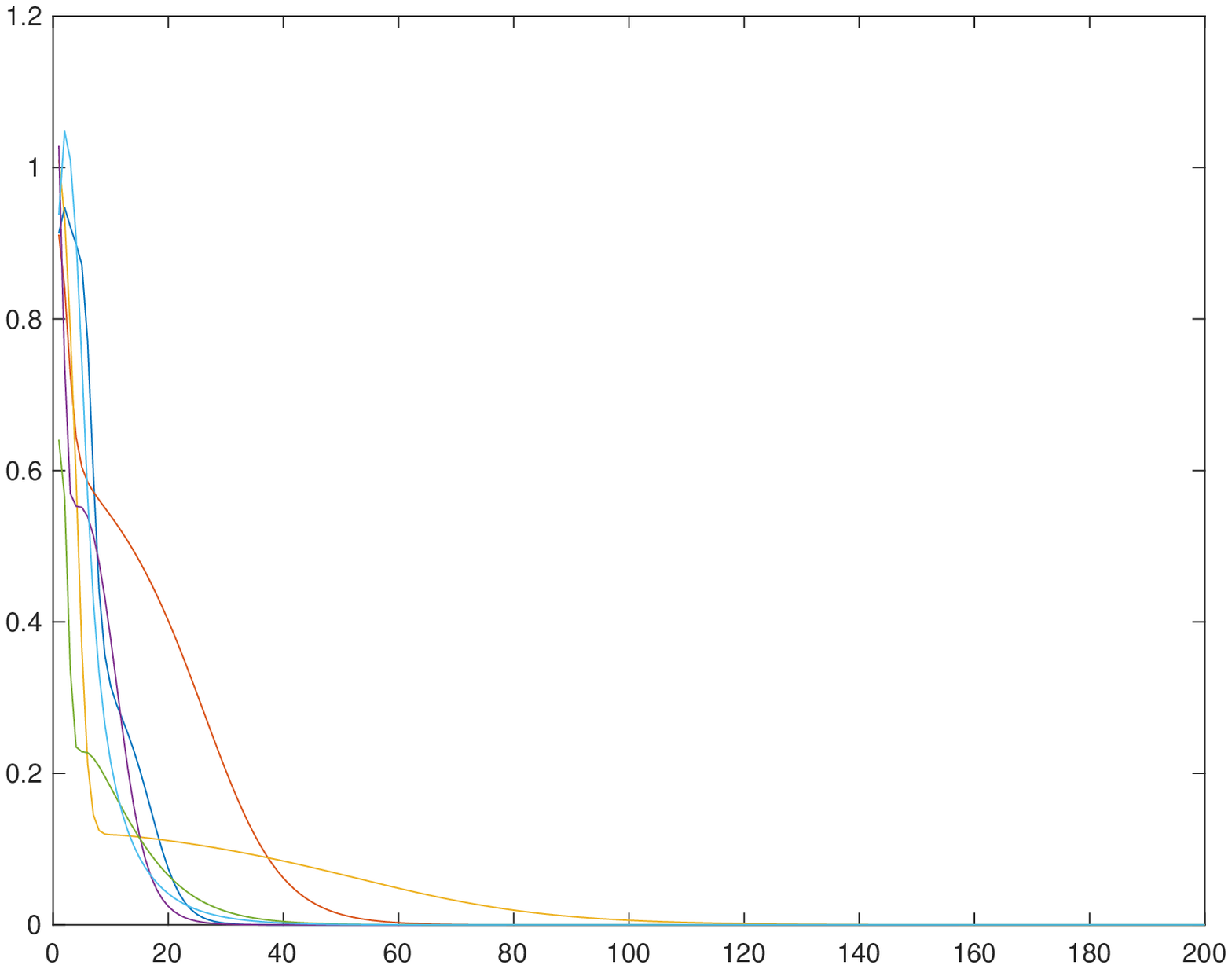}}
\subfloat[]{\includegraphics[width=6cm]{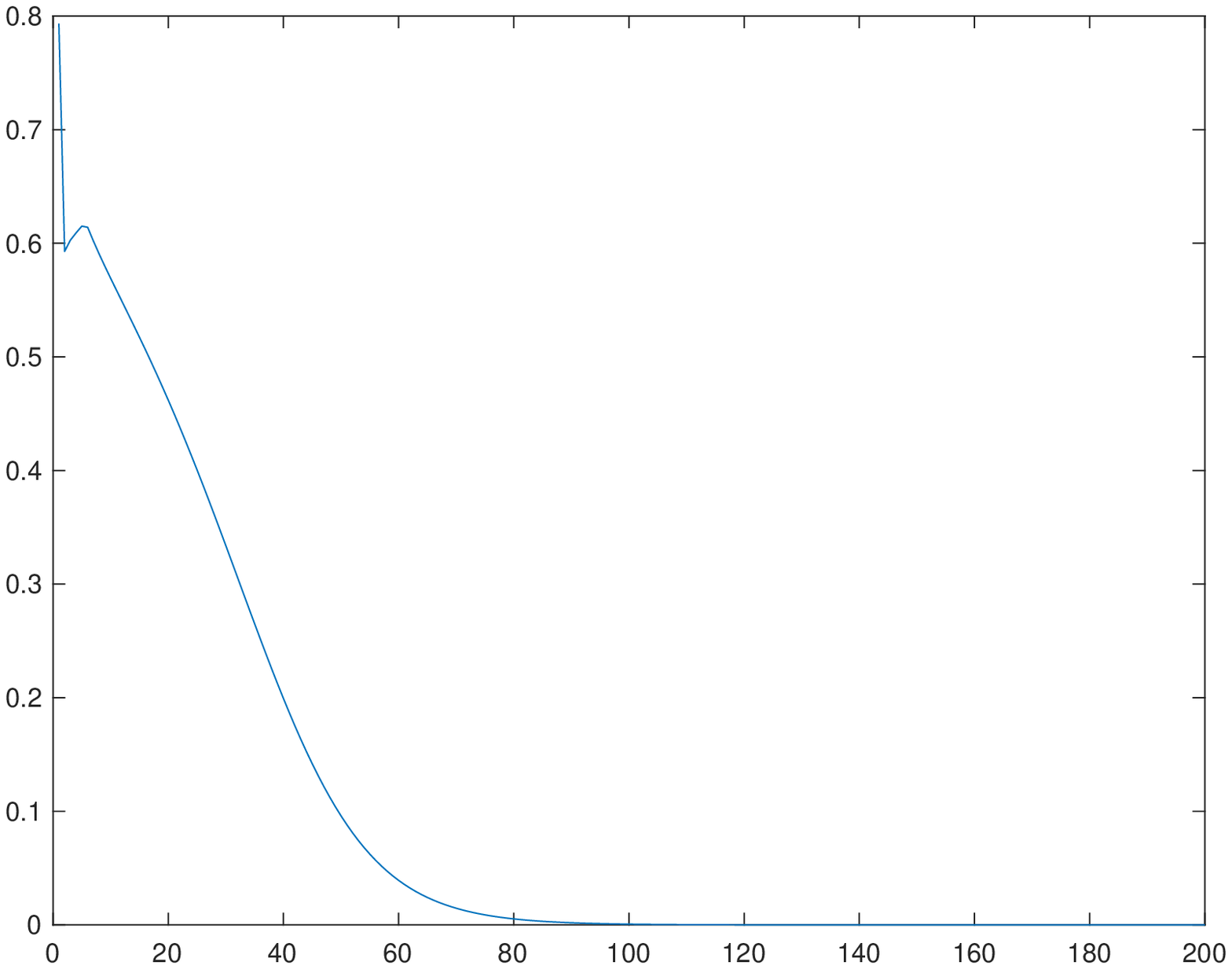}}
\subfloat[]{\includegraphics[width=6cm]{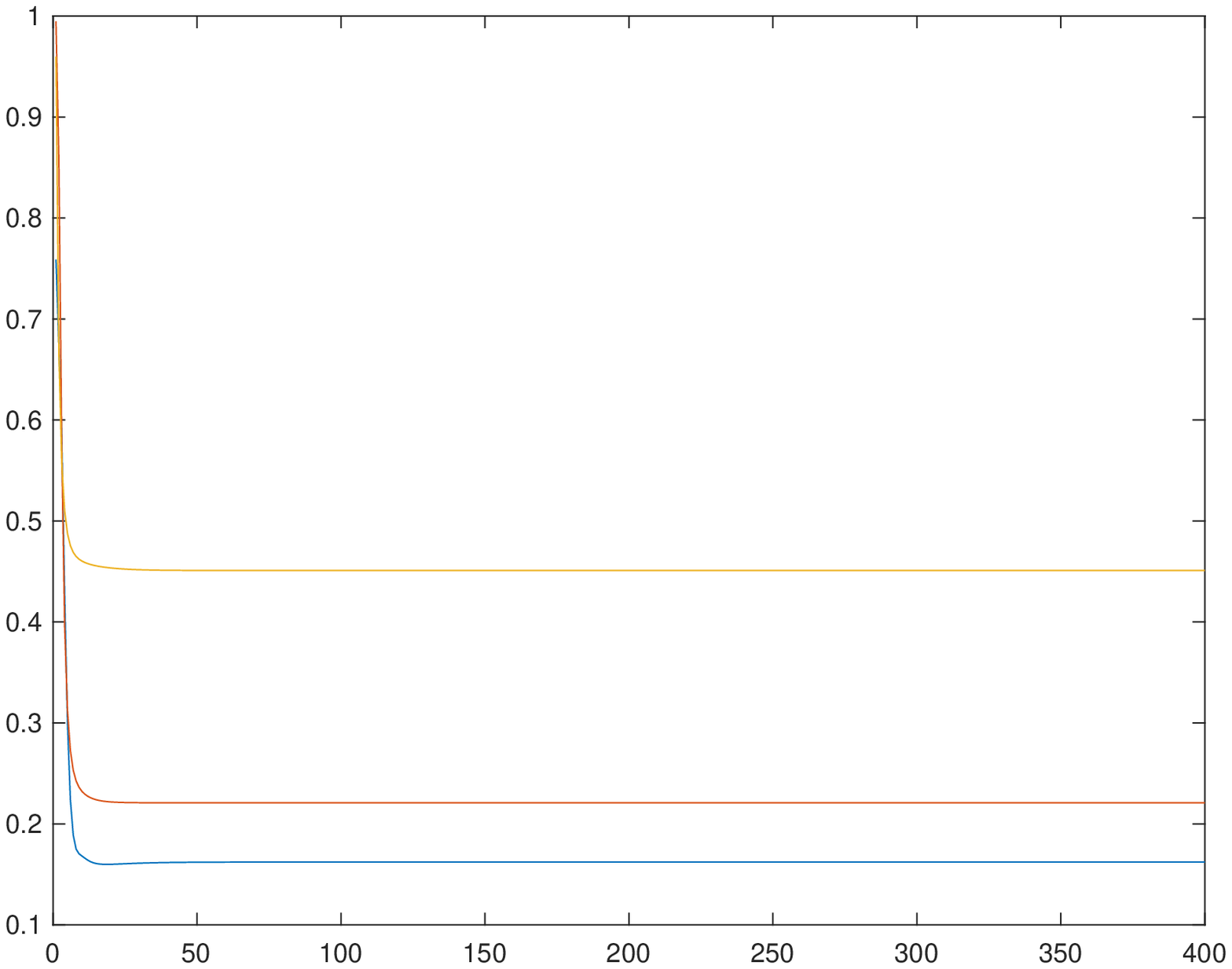}}
\caption{(A) typical error decay of successful reconstructions; (B) reconstruction with several iterations of increasing error; (C) typical error decay of unsuccessful reconstructions.}
\label{fig:alt-proj}
\end{figure}

\subsection{Implementation of Algorithm \ref{RA:conv}} \label{ssec:numerics}

We performed several numerical experiments to instantiate Algorithm \ref{RA:conv}.  Using Theorem \ref{Th:main1} and the discussion leading up to Algorithm \ref{RA:conv}, we chose the following parameters: 
\[ t_{n} = x_{n} = \dfrac{n}{2}; \qquad b_0 = 0, \ b_{1} = \dfrac{1}{2}, \text{ and } b_{2} = 1, \] 
and the coefficient matrix $V$ as in Equation \eqref{Eq:cpr-matrix}.
To begin, we defined a function $f \in PW_{\pi}$ by randomly generating complex numbers representing the samples $f(-10), \dots, f(-1)$ and $f(1), \dots, f(10)$, and set $f(0) = 0$.  Thus, our signal is 
\begin{equation} \label{Eq:interpolation}
f(t) = \sum_{n=-10}^{10} f(n) \ \text{sinc}(t - n).
\end{equation}
Using these samples, we populated the array
\begin{equation}
P = \begin{bmatrix} \cdots & f( \frac{n+1}{2} ) & f( \frac{n+2}{2} ) & \cdots \\ \cdots & f( \frac{n}{2} ) & f( \frac{n+1}{2} ) & \cdots \\ \cdots & f(\frac{n-1}{2}) & f(\frac{n}{2}) & \cdots \end{bmatrix}
\end{equation}
for $n=-40,\dots,40$ via the interpolation formula in Equation \eqref{Eq:interpolation}.  We then used 
\begin{equation}
R = | V^{T}  P |
\end{equation}
as the input data to Algorithm \ref{RA:conv}.

We chose $\beta$ randomly using MATLAB's \texttt{rand} function.  Using the entries of $R$ and the interpolation formula \eqref{Eq:interpolation2} truncated to $n=-40,\dots,40$, we reconstructed the entries of the matrix

\begin{equation} \label{Eq:unphased}
R_{\beta} = \begin{bmatrix}
\cdots & |f( \frac{n+1}{2} - \beta)| & |f( \frac{n+2}{2} - \beta)|  & \cdots \\ \cdots & | f(\frac{n}{2} - \beta)|  & | f(\frac{n+1}{2} - \beta)|   & \cdots \\ \cdots & | f(\frac{n-1}{2} - \beta) | & | f(\frac{n}{2} - \beta) |  & \cdots  \\  \cdots & |f( \frac{n+1}{2} - \beta) - f(\frac{n}{2} - \beta) | & |f( \frac{n+2}{2} - \beta) - f(\frac{n+1}{2} - \beta) |  & \cdots  \\ \cdots &  |f( \frac{n+1}{2} - \beta) - f(\frac{n-1}{2} - \beta) |  &  |f( \frac{n+2}{2} - \beta) - f(\frac{n}{2} - \beta) |  & \cdots  \\ \cdots &  | f(\frac{n}{2} - \beta) - f(\frac{n-1}{2} - \beta) | &  | f(\frac{n+1}{2} - \beta) - f(\frac{n}{2} - \beta) |  & \cdots 
\end{bmatrix}
\end{equation}

Since $|f|^2 \in PW_{2 \pi}$, we have
\begin{equation} \label{Eq:interpolation2}
|f(t)|^2 = \sum_{n \in \mathbb{Z}} \left| f \left( \dfrac{n}{2} \right) \right|^2 \text{sinc}(2t - n).
\end{equation}
Similarly for $|f(\cdot + \frac{1}{2}) - f(\cdot)|^2$ and $|f(\cdot + \frac{1}{2}) - f(\cdot - \frac{1}{2})|^2$.

We applied the Gerchberg-Saxton method of alternating projections as described in Subsection \ref{ssec:ap} to each column of the matrix in \eqref{Eq:unphased} to obtain the estimate $\lambda(\frac{n}{2} - \beta) \vec{F}(\frac{n}{2} - \beta)$ of the $n$th column of the matrix
\begin{equation*}
P_{\beta} = \begin{bmatrix} \cdots & f( \frac{n+1}{2} - \beta ) & f( \frac{n+2}{2} - \beta) & \cdots \\ \cdots & f( \frac{n}{2} - \beta) & f( \frac{n+1}{2} - \beta) & \cdots \\ \cdots & f(\frac{n-1}{2} -\beta) & f(\frac{n}{2} - \beta) & \cdots \end{bmatrix}.
\end{equation*}
As we observed in Subsection \ref{ssec:ap}, the Gerchberg-Saxton method can fail (e.g. Figure \ref{fig:alt-proj} (C)), so we apply the method to each column of the matrix 100 times, each with a different (random) seeding.  For $k= 1,\dots, 100$, we utilize 900 iterations of Equation \eqref{Eq:alt-proj} to obtain an estimate $\vec{x}_{k}$, then choose
\begin{equation*} 
\lambda(\frac{n}{2} - \beta) \vec{F}(\frac{n}{2} - \beta) = V^{\dagger} \left( \text{argmin }_{k=1,\dots,100} \| \vec{y}_{n} - | \vec{x}_{k}| \| \right)
\end{equation*}
where $\vec{y}_{n}$ is the $n$th column of $R_{\beta}$.

This estimate will be ambiguous up to unknown phase factor and conjugation.  As the $n$-th and $n+1$-st column of $P_{\beta}$ have two entries in common, working from $n=-40$ to $n=40$, we choose $\lambda(\frac{n}{2} - \beta)$ and, if necessary, conjugate $\vec{F}(\frac{n}{2} - \beta)$, so that the corresponding entries of $\vec{F}(\frac{n-1}{2})$ and $\lambda(\frac{n}{2} - \beta) \vec{F}(\frac{n}{2} - \beta)$ agree.  With probability $1$, there is no ambiguity in these choices by Lemmas \ref{L:resample} and \ref{L:colinear}.

We ran the Algorithm \ref{RA:conv} on 100 instances of $f \in PW_{\pi}$ with randomly generated values for $f(-10), \dots, f(-1), f(1), \dots, f(10)$.  For each instance, we ran Algorithm \ref{RA:conv} 20 times, each with a different (randomly generated) value of $\beta$.  We then chose the reconstruction $r$ of the 20 that minimized the following reconstruction error: 
\begin{equation*}
min  \left\{ \dfrac{\| f \otimes f - r \otimes r \|_{F}} {\| f \otimes f\|_{F}}, \dfrac{\| f \otimes f - \overline{r} \otimes \overline{r} \|_{F}} {\| f \otimes f\|_{F}} \right\}\end{equation*}
where $\| \cdot \|_{F}$ is the Fr\"obenius norm, and $f$ and $r$ are the vector of samples
\[ ( f(-20), f(-19.5), f(-19), \dots, f(19.5), f(20) )^{T}, \quad ( r(-20), r(-19.5), r(-19), \dots, r(19.5), r(20) )^{T} \]
respectively.  Over the 100 instances we found that the largest relative reconstruction error (after choosing the minimizer over the 20 applications of Algorithm \ref{RA:conv}) was $0.0504$.

The main source of error in our experiments seem to be Step 6 in Algorithm \ref{RA:conv}.  In all of our instances, the function $f$ has the property that $f(n) = 0$ for $|n| > 10$, so we might say that it is sparse in the standard basis $\{ \text{sinc }(t - n) \}_{n \in \mathbb{Z}}$.  However, to avoid the fact that $f(0) = 0$, we shift the reconstruction samples to $\{ \frac{n}{2} - \beta \}_{n \in \mathbb{Z}}$, but $f$ is not sparse in this coordinate system (frame) on $PW_{\pi}$.  Since we only utilize the samples $\{ \frac{n}{2} - \beta \}_{|n| \leq 20}$, we lose some of the energy of $f$ from this truncation.


We illustrate one instance of the signal reconstruction in Figure \ref{fig:instance}.  MATLAB code for these numerical experiments are available at \texttt{bitbucket.org/esweber/conjugate-phase-retrieval/}.

\begin{figure}[h]
\begin{tabular}{cc}
\includegraphics[width=8cm]{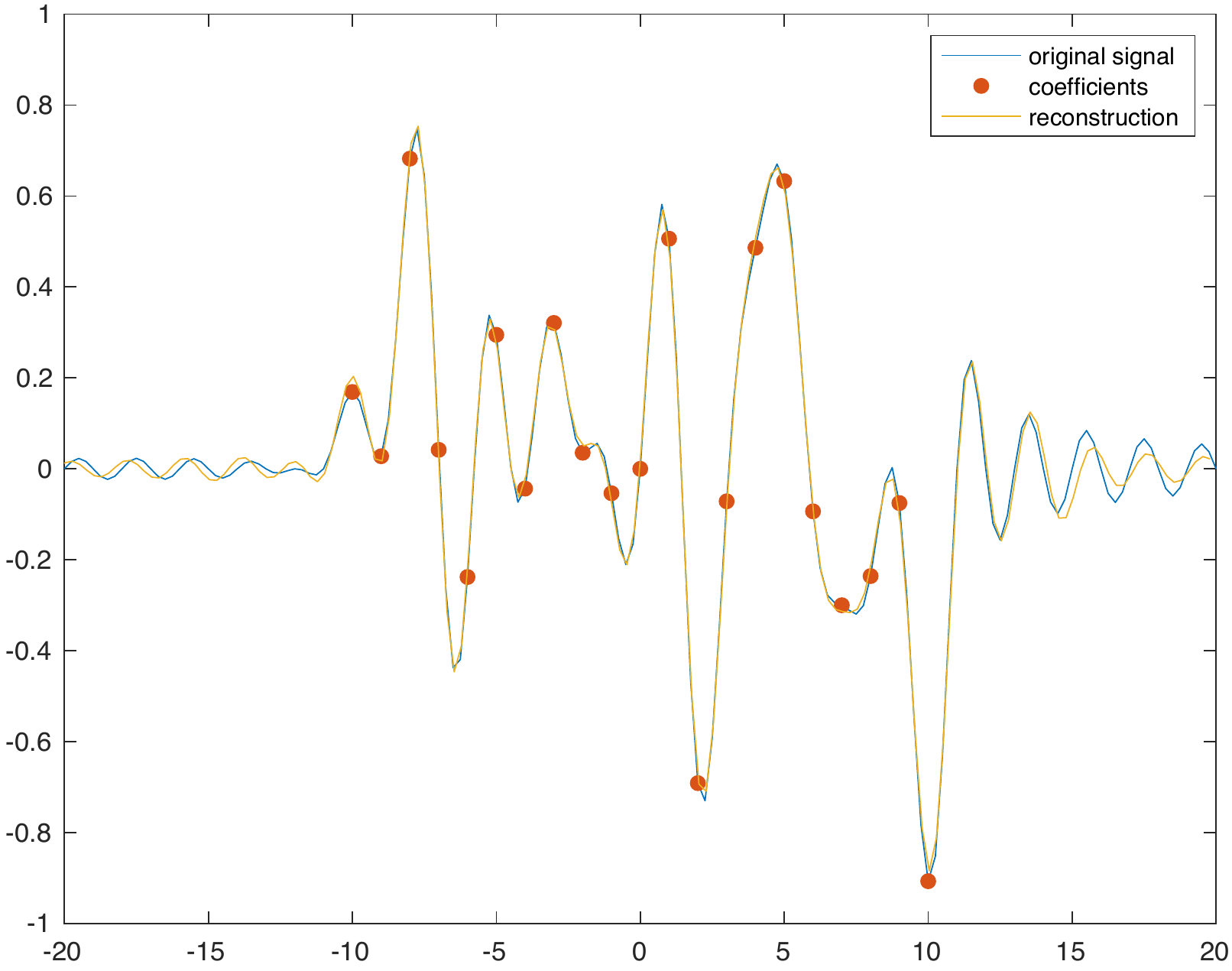} & \includegraphics[width=8cm]{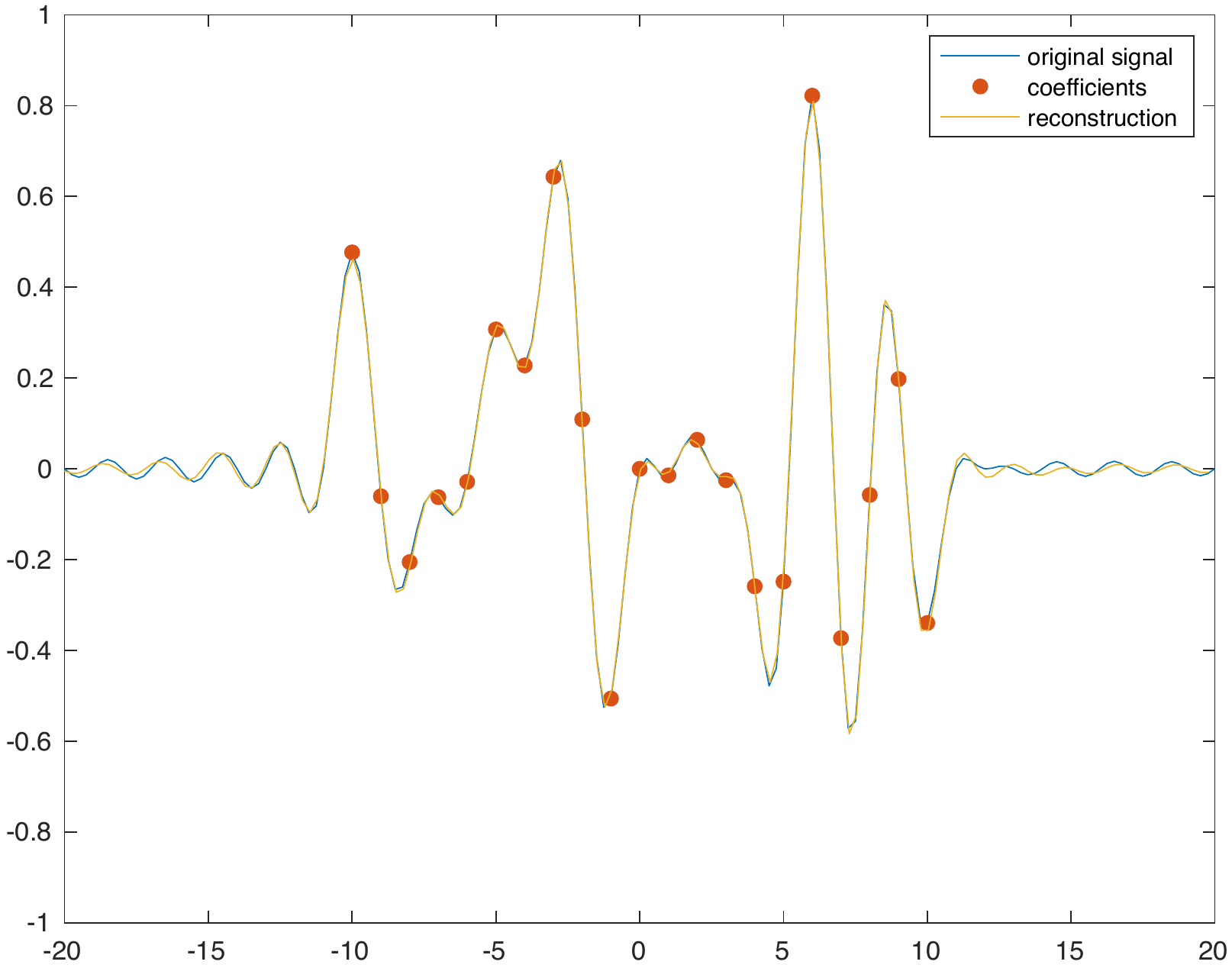} \\
Real part & Imaginary part
\end{tabular}
\caption{Reconstruction error: ($f$-original signal; $r$-reconstructed signal)
\[ \| f \otimes f - r \otimes r \|_{F}/ \| f \otimes f \|_{F} = 0.026 \]
$\beta = 0.2119$ (chosen using the MATLAB command \texttt{rand}).}
\label{fig:instance}
\end{figure}

\section{Appendix}

\subsection*{Continuity of the Reconstruction} \label{ssec:continuity}

Given the equivalence relation defined by Equation (\ref{Eq:equiv}), we have by Theorem \ref{Th:main} that for sequences $\{ t_{n} \}$ that are sampling sequences for $\mathcal{H}(E^2)$, the mapping
\begin{equation} \label{Eq:transform}
\widetilde{\mathcal{A}} : \mathcal{H}(E)/ \sim \ \to \ell^{2}(\mathbb{Z}) \oplus \ell^{2}(\mathbb{Z}) : f \mapsto ( | f(t_{n}) |, | f'(t_{n}) | )_{n}
\end{equation}
is one-to-one.  

We endow the quotient space $\mathcal{H}(E)/ \sim$ with the natural metric
\begin{equation} \label{Eq:metric}
d([f],[g]) := \inf \{ \| h_{1} - h_{2} \| : h_{1} \in [f], \ h_{2} \in [g] \} = \inf \{ \{ \| f - \alpha g \|, \| f - \alpha g^{\sharp} \| : | \alpha | = 1 \}. 
\end{equation}
In this metric, $\widetilde{\mathcal{A}}$ is continuous, since it is the composition of two continuous maps, namely the frame operator and the absolute value.  Following the idea of \cite{MW15a}, we demonstrate that the inverse is also continuous.

\begin{proposition}
The range $\mathcal{R}(\widetilde{\mathcal{A}})$ is closed.  The inverse of $\widetilde{\mathcal{A}}$ is continuous from $\mathcal{R}(\widetilde{\mathcal{A}})$ to $\mathcal{H}(E)/ \sim$.
\end{proposition}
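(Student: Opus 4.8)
The plan is to deduce both assertions from a single soft property of the measurement map, namely its \emph{sequential properness}: whenever representatives $f_k$ satisfy $\widetilde{\mathcal{A}}([f_k])\to y$ in $\ell^{2}(\mathbb{Z})\oplus\ell^{2}(\mathbb{Z})$, some subsequence $[f_{k_j}]$ converges in the quotient metric $d$ of \eqref{Eq:metric}. Granting this, the conclusions are formal. The map $\widetilde{\mathcal{A}}$ is continuous and, by Theorem \ref{Th:main}, injective, and a continuous injective sequentially proper map between metric spaces is a closed map; hence its range is closed and its inverse is continuous. First I would record the abstract implication: if $\widetilde{\mathcal{A}}([f_k])\to y$, properness yields $[f_{k_j}]\to[f]$ and continuity gives $y=\widetilde{\mathcal{A}}([f])$, so $y\in\mathcal{R}(\widetilde{\mathcal{A}})$ (closedness); a standard further-subsequence argument using injectivity then upgrades convergence of images to convergence in $PW_\pi/\!\sim$, giving continuity of $\widetilde{\mathcal{A}}^{-1}$. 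Everything therefore rests on proving properness.

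To set up properness, note that since $\{t_n\}$ is a set of sampling for $PW_{2\pi}\supset PW_\pi$, the same inequalities make it a set of sampling for $PW_\pi$, so $A\|g\|^2\le\sum_n|g(t_n)|^2\le B\|g\|^2$ for $g\in PW_\pi$; the upper (Bessel) bound forces $\{t_n\}$ to be relatively separated, so each bounded interval contains only finitely many $t_n$. If $\widetilde{\mathcal{A}}([f_k])$ converges then $\sum_n|f_k(t_n)|^2$ is bounded, whence $\|f_k\|$ is bounded by the lower bound. A norm-bounded sequence in the Hilbert space $PW_\pi$ has a weakly convergent subsequence $f_{k_j}\rightharpoonup f\in PW_\pi$; since point evaluations are bounded functionals and a norm-bounded family of entire functions of exponential type is a normal family, weak convergence upgrades to locally uniform convergence $f_{k_j}\to f$ on $\mathbb{C}$, hence also $f'_{k_j}\to f'$ locally uniformly. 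In particular $|f_{k_j}(t_n)|\to|f(t_n)|$ and $|f'_{k_j}(t_n)|\to|f'(t_n)|$ for every $n$; matching these coordinatewise limits with the $\ell^2$-limit $y$ identifies $y=\widetilde{\mathcal{A}}([f])$.

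It remains to upgrade the weak convergence $f_{k_j}\rightharpoonup f$ to strong convergence, which is the crux. Writing $r_j:=f_{k_j}-f\rightharpoonup 0$ (so $r_j\to 0$ locally uniformly), I would use that $\ell^2$-convergence of $(|f_{k_j}(t_n)|)_n$ to $(|f(t_n)|)_n$ forces $\sum_n|f_{k_j}(t_n)|^2\to\sum_n|f(t_n)|^2$, and expand
\begin{equation*}
\sum_n|f_{k_j}(t_n)|^2=\sum_n|f(t_n)|^2+2\,\mathrm{Re}\sum_n\overline{f(t_n)}\,r_j(t_n)+\sum_n|r_j(t_n)|^2 .
\end{equation*}
The cross term tends to $0$: splitting at $|t_n|\le R$ and $|t_n|>R$, the near part has finitely many terms, each tending to $0$ because $r_j\to 0$ locally uniformly, while the far part is bounded by $\big(\sum_{|t_n|>R}|f(t_n)|^2\big)^{1/2}\big(\sum_n|r_j(t_n)|^2\big)^{1/2}$, which is small uniformly in $j$ since $\sum_n|f(t_n)|^2<\infty$ and the Bessel bound controls $\sum_n|r_j(t_n)|^2$. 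Hence $\sum_n|r_j(t_n)|^2\to 0$, and the lower sampling bound gives $\|r_j\|^2\le A^{-1}\sum_n|r_j(t_n)|^2\to 0$, i.e.\ $f_{k_j}\to f$ in $PW_\pi$. Then $d([f_{k_j}],[f])\le\|f_{k_j}-f\|\to 0$, which is exactly properness.

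The main obstacle is precisely this last upgrade from weak to strong convergence: a priori the $L^2$ mass of $f_{k_j}$ could escape to infinity (this is consistent with weak convergence, and one checks it is even consistent with $L^2$-convergence of $|f_{k_j}|^2$ alone), and what rules it out is the convergence of the \emph{sampled} energies $\sum_n|f_{k_j}(t_n)|^2$ together with the frame inequality and the vanishing of the cross term. The identical argument applies to $\mathcal{A}$ (using the difference samples in place of the derivative samples) and in the de Branges setting $\mathcal{H}(E)$ with $\{t_n\}$ a set of sampling for $\mathcal{H}(E^2)$.
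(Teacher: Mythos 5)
Your proof is correct, but it follows a genuinely different route from the paper's. The paper never invokes weak compactness or normal families: it lifts the phaseless data to \emph{phased} samples $\big(f_k(t_n), f_k'(t_n)\big)_n$, extracts by a diagonal argument a subsequence converging coordinatewise in $\mathbb{C}^2$ for every $n$, upgrades this to $\ell^2$-convergence of the phased samples by transferring the uniform tail smallness of the converging moduli $\big(|a_n^{(k)}|,|b_n^{(k)}|\big)_n$ (possible since $|\alpha_n^{(k)}|=a_n^{(k)}$, $|\beta_n^{(k)}|=b_n^{(k)}$), and then pulls back to a function $f_0$ using the closed range and continuous inverse of the phased sampling transform $\Phi : f \mapsto (f(t_n), f'(t_n))_n$, which is bounded above and below by the sampling hypothesis. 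You instead compactify in the function space: the lower frame bound gives $\|f_k\|$ bounded, Banach--Alaoglu plus the reproducing-kernel/normal-family structure of $PW_\pi$ gives a locally uniformly convergent subsequence, and you upgrade weak to strong convergence by expanding $\sum_n |f_{k_j}(t_n)|^2$ and killing the cross term with a split-at-$R$ estimate, then applying the lower sampling bound to $r_j = f_{k_j}-f$. Both arguments ultimately rest on the same two sampling inequalities, and both must perform an ``upgrade to norm convergence'' step, but the mechanisms differ: the paper's tail-transfer works entirely in sequence space and needs no function theory beyond the sampling transform, while your cross-term expansion exploits the Hilbert-space geometry and the RKHS structure directly. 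Your formulation via sequential properness also cleanly isolates the topological bookkeeping (closed range and continuity of the inverse both drop out formally, the latter via the standard subsequence-of-a-subsequence argument with injectivity from Theorem \ref{Th:main}), which the paper instead carries out twice, once for each assertion; and your route arguably transfers more transparently to other reproducing-kernel settings such as de Branges spaces, at the cost of invoking weak compactness, which the paper's more elementary diagonalization avoids.
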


\begin{proof}
Assume the sequence $\{ ( a_{n}^{(k)}, b_{n}^{(k)} )_{n} \}_{k} \subset \mathcal{R}(\widetilde{\mathcal{A}})$ converges in $\ell^{2}(\mathbb{Z}) \oplus \ell^{2}(\mathbb{Z})$ to $( a_{n}^{(0)}, b_{n}^{(0)})_{n}$.  For each $k$, there exists a $f_{k} \in \mathcal{H}(E)$ such that $ ( a_{n}^{(k)}, b_{n}^{(k)} )_{n} = ( | f_{k}(t_{n}) |, | f'_{k}(t_{n})|)_{n}$; for convenience, denote $ ( f_{k}(t_{n}) ,  f'_{k}(t_{n}) )_{n} = (\alpha_{n}^{(k)}, \beta_{n}^{(k)})_{n}$. For each fixed $n$, the sequence $\{ (\alpha_{n}^{(k)}, \beta_{n}^{(k)} ) \}_{k}$ has a convergent subsequence in $\mathbb{C}^{2}$; by a standard diagonalization argument there exists a subsequence that converges for every $n$.  Denote this subsequence by $k_{j}$, and the limit $( \alpha_{n}^{(0)}, \beta_{n}^{(0)})$.  We claim that
\[ \lim_{j} ( \alpha_{n}^{k_{j}}, \beta_{n}^{k_{j}} )_{n} = (\alpha_{n}^{(0)}, \beta_{n}^{(0)})_{n} \]
in the $\ell^2$-norm.

For $N \in \mathbb{N}$,
\begin{align*}
\sqrt{ \sum_{|n| \geq N} | \alpha_{n}^{(k_{j})}|^2 + | \beta_{n}^{(k_{j})} |^2 } &= \sqrt{ \sum_{|n| \geq N} | a_{n}^{(k_{j})}|^2 + | b_{n}^{(k_{j})} |^2 } \\
&\leq \sqrt{\sum_{|n| \geq N} | a_{n}^{(k_{j})} - a_{n}^{0} |^2 + | b_{n}^{(k_{j})} -  b_{n}^{0} |^2 } + \sqrt{\sum_{|n| \geq N} | a_{n}^{0} |^2 + | b_{n}^{0} |^2 }
\end{align*}
We are assuming that $\sum_{n \in \mathbb{Z}} | a_{n}^{(k_{j})} - a_{n}^{0} |^2 + | b_{n}^{(k_{j})} -  b_{n}^{0} |^2 \to 0$ as $j \to \infty$, so we have that
\[ \limsup_{j \to \infty} \sqrt{ \sum_{|n| \geq N} | \alpha_{n}^{(k_{j})}|^2 + | \beta_{n}^{(k_{j})} |^2 } \leq \sqrt{\sum_{|n| \geq N} | a_{n}^{0} |^2 + | b_{n}^{0} |^2 }. \]
It follows that
\begin{align*} 
\limsup_{j \to \infty} & \sum_{n \in \mathbb{Z}} | \alpha_{n}^{(k_{j})} - \alpha_{n}^{(0)} |^2 + | \beta_{n}^{(k_{j})} - \beta_{n}^{(0)} |^2  \\ 
&\leq  \limsup_{j \to \infty}  \sum_{|n| < N} | \alpha_{n}^{(k_{j})} - \alpha_{n}^{(0)} |^2 + | \beta_{n}^{(k_{j})} - \beta_{n}^{(0)} |^2  + 
		\limsup_{j \to \infty} \sum_{|n| \geq N} | \alpha_{n}^{(k_{j})} - \alpha_{n}^{(0)} |^2 + | \beta_{n}^{(k_{j})} - \beta_{n}^{(0)} |^2 \\
&= \limsup_{j \to \infty} \sum_{|n| \geq N} | \alpha_{n}^{(k_{j})} - \alpha_{n}^{(0)} |^2 + | \beta_{n}^{(k_{j})} - \beta_{n}^{(0)} |^2 \\
&\leq \left( \sum_{|n| \geq N} | a_{n}^{(0)} |^2 + | b_{n}^{(0)} |^2 + \sum_{|n| \geq N} | \alpha_{n}^{(0)} |^2 + | \beta_{n}^{(0)} |^2 \right).
\end{align*}
Therefore,
\begin{align*}
\limsup_{j \to \infty} & \sum_{n \in \mathbb{Z}} | \alpha_{n}^{(k_{j})} - \alpha_{n}^{(0)} |^2 + | \beta_{n}^{(k_{j})} - \beta_{n}^{(0)} |^2 \\
&\leq \limsup_{N \to \infty} \left( \sum_{|n| \geq N} | a_{n}^{(0)} |^2 + | b_{n}^{(0)} |^2 + \sum_{|n| \geq N} | \alpha_{n}^{(0)} |^2 + | \beta_{n}^{(0)} |^2 \right) \\
&= 0.
\end{align*}
The completes the claim.  We have that the sequence $ \{ ( \alpha_{n}^{(k_{j})}, \beta_{n}^{(k_{j})})_{n} \}_{j} $ is contained in the image of the sampling transform, which has closed range, and therefore $(\alpha_{n}^{(0)}, \beta_{n}^{(0)})_{n}$ is also in the range of the sampling transform, whence there exists an $f_{0} \in \mathcal{H}(E)$ such that $((\alpha_{n}^{(0)}, \beta_{n}^{(0)})_{n} = ( f_{0} (t_{n}), f'_{0}(t_{n}) )_{n}$ from which we obtain that $(a_{n}^{(0)}, b_{n}^{(0)})_{n} = ( | f_{0} (t_{n}) |, |f'_{0}(t_{n}) | )_{n} \in \mathcal{R}(\widetilde{\mathcal{A}})$.  This concludes the proof of the first part.

Now for the continuity:  (outline)
\begin{enumerate}
\item  Fix a sequence $\vec{v}_{n}$ of elements in $\mathcal{R}(\widetilde{\mathcal{A})}$ that converges.
\item  For each such element, pick a representative $f_{n}$ where $\widetilde{\mathcal{A}} ([f_{n}]) = \vec{v}_{n}$.
\item  For each subsequence of $\{ f_{n} \}$, there exists a subsequence such that $ \{ \Phi (f_{n_{j_{k}}}) \}$ converges in $\ell^2$.
\item  For this subsequence, $\{ f_{n_{j_{k}}} \}$ converges in $\mathcal{H}(E)$, therefore $ [ f_{n_{j_{k}}} ] \to [ f ] $.
\end{enumerate}

To prove continuity, assume $\widetilde{\mathcal{A}}( [ f_{k} ] ) = ( a_{n}^{(k)}, b_{n}^{(k)} )_{n}$ converges to $ ( a_{n}^{(0)}, b_{n}^{(0)} )_{n} = \widetilde{\mathcal{A}}([f_{0}])$ in the $\ell^{2}$-norm.  We prove that every subsequence $ [ f_{k_{j}} ]$ has a subsequence that converges to $ [ f_{0} ]$.  As before, the sequence
\[ \Phi( f_{k_{j}} ) = ( f_{k_{j}} (t_{n}), f'_{k_{j}} (t_{n}) )_{n} \]
has a subsequence $f_{k_{j_{l}}}$ such that $\Phi( f_{k_{j_{l}}} )$ converges pointwise to a sequence $( \alpha_{n}^{(0)}, \beta_{n}^{(0)} )$, which is $\Phi(f)$ for some $f$.  Note that $( | f(t_{n}) |, | f'(t_{n}) | )_{n} = (a_{n}^{(0)}, b_{n}^{(0)})_{n}$, so $f \in [ f_{0} ]$.  Now, again by above, we have that $\Phi(f_{k_{j_{l}}})$ converges to $\Phi(f)$ in the $\ell^2$-norm; since $\Phi$ has a continuous inverse, we have $f_{k_{j_{l}}}$ converges to $f$ in $\mathcal{H}(E)$.  It follows that $[ f_{k_{j_{l}}} ] \to [f_{0}]$, completing the proof.

\end{proof}

\subsection*{Conjugate Phase Retrieval in Other Spaces}

There are other natural spaces for which it may be possible to extend our methods  (see also \cite{chen2019phase} for related results).  In particular, spaces whose elements are entire functions are natural to consider, since our methods utilized properties of zeros of entire functions.  Note that other properties of $PW_{\pi}$ we used include: there are sets of sampling for $PW_{\pi}$ that have regular structure (in particular, finite unions of lattices); the squares of elements in $PW_{\pi}$ lie in a space that also have sets of sampling; and $PW_{\pi}$ is closed under translations.  Spaces that are natural to consider include $PW_{\gamma}^{p}$; Bernstein spaces \cite{PYB14a}; de Branges spaces \cite{goncalves2018interpolation}; and generalized Paley-Wiener spaces as defined in \cite{weber2019paley}.  None of these spaces satisfy all of the properties of $PW_{\gamma}$ that we use in this paper.  The generalized Paley-Wiener spaces need not be closed under $f \mapsto f^{\sharp}$, but do admit a sampling theory \cite{HW17a}.

\subsection*{Concluding Remarks}
Code for numerical experiments in Subsections \ref{ssec:ap} and \ref{ssec:numerics} is available from 
\texttt{bitbucket.org/esweber/conjugate-phase-retrieval/}.

Acknowledgements:  Eric Weber was supported in part by the National Science Foundation under award \#1934884 and the National Geospatial-Intelligence Agency under award \#1830254.

\bibliographystyle{amsplain}
\bibliography{cprpws}

\providecommand{\bysame}{\leavevmode\hbox to3em{\hrulefill}\thinspace}
\providecommand{\MR}{\relax\ifhmode\unskip\space\fi MR }
\providecommand{\MRhref}[2]{%
  \href{http://www.ams.org/mathscinet-getitem?mr=#1}{#2}
}
\providecommand{\href}[2]{#2}
\begin{thebibliography}{10}

\bibitem{ADGT17a}
Rima Alaifari, Ingrid Daubechies, Philipp Grohs, and Gaurav Thakur,
  \emph{Reconstructing real-valued functions from unsigned coefficients with
  respect to wavelet and other frames}, Journal of Fourier Analysis and
  Applications \textbf{23} (2017), no.~6, 1480--1494.

\bibitem{alaifari2018stable}
Rima Alaifari, Ingrid Daubechies, Philipp Grohs, and Rujie Yin, \emph{Stable
  phase retrieval in infinite dimensions}, Foundations of Computational
  Mathematics (2018), 1--32.

\bibitem{A02a}
A.~Aldroubi, \emph{Non-uniform weighted average sampling and reconstruction in
  shift-invariant and wavelet spaces}, Appl. Comp. Harmonic Anal. \textbf{13}
  (2002), no.~2, 151--161.

\bibitem{AG00a}
A.~Aldroubi and K.~Gr{\"o}chenig, \emph{Nonuniform sampling and reconstruction
  in shift-invariant spaces}, SIAM Rev. \textbf{43} (2001), no.~4, 585--620
  (electronic).

\bibitem{BBCE09a}
Radu Balan, Bernhard~G. Bodmann, Peter~G. Casazza, and Dan Edidin,
  \emph{Painless reconstruction from magnitudes of frame coefficients}, J.
  Fourier Anal. Appl. \textbf{15} (2009), no.~4, 488--501. \MR{2549940}

\bibitem{BCE06a}
Radu Balan, Pete Casazza, and Dan Edidin, \emph{On signal reconstruction
  without phase}, Appl. Comput. Harmon. Anal. \textbf{20} (2006), no.~3,
  345--356. \MR{2224902}

\bibitem{BCMN14a}
Afonso~S. Bandeira, Jameson Cahill, Dustin~G. Mixon, and Aaron~A. Nelson,
  \emph{Saving phase: injectivity and stability for phase retrieval}, Appl.
  Comput. Harmon. Anal. \textbf{37} (2014), no.~1, 106--125. \MR{3202304}

\bibitem{BF01a}
John~J. Benedetto and Paulo J. S.~G. Ferreira, \emph{Modern sampling theory},
  Appl. Numer. Harmon. Anal., Birkh\"{a}user Boston, Boston, MA, 2001.
  \MR{1865679}

\bibitem{CCD16a}
Jameson Cahill, Peter~G. Casazza, and Ingrid Daubechies, \emph{Phase retrieval
  in infinite-dimensional {H}ilbert spaces}, Trans. Amer. Math. Soc. Ser. B
  \textbf{3} (2016), 63--76. \MR{3554699}

\bibitem{candes2013phase}
Emmanuel~J. Cand\`es, Yonina~C. Eldar, Thomas Strohmer, and Vladislav
  Voroninski, \emph{Phase retrieval via matrix completion}, SIAM J. Imaging
  Sci. \textbf{6} (2013), no.~1, 199--225. \MR{3032952}

\bibitem{chen2019phase}
Yang Chen, Cheng Cheng, and Qiyu Sun, \emph{Phase retrieval of complex and
  vector-valued functions}, preprint, 2019.

\bibitem{DS52a}
R.~Duffin and A.~Schaeffer, \emph{A class of nonharmonic {F}ourier series},
  Trans. Amer. Math. Soc. \textbf{72} (1952), 341--366.

\bibitem{EL2017}
Luke Evans and Chun-Kit Lai, \emph{Conjugate phase retrieval on {${\bf C}^M$}
  by real vectors}, https://arxiv.org/abs/1709.08836 (2017).

\bibitem{Fienup78}
J.~R. Fienup, \emph{Reconstruction of an object from the modulus of its fourier
  transform}, Opt. Lett. \textbf{3} (1978), no.~1, 27--29.

\bibitem{gerchberg1972practical}
R.W. Gerchberg and W.O. Saxton, \emph{A practical algorithm for the
  determination of phase from image and diffraction plane pictures}, Optik
  \textbf{35} (1972), 227--246.

\bibitem{goldstein2018phasemax}
Tom Goldstein and Christoph Studer, \emph{Phase{M}ax: convex phase retrieval
  via basis pursuit}, IEEE Trans. Inform. Theory \textbf{64} (2018), no.~4,
  part 1, 2675--2689. \MR{3782281}

\bibitem{goncalves2017interpolation}
Felipe Gon\c{c}alves, \emph{Interpolation formulas with derivatives in de
  {B}ranges spaces}, Trans. Amer. Math. Soc. \textbf{369} (2017), no.~2,
  805--832. \MR{3572255}

\bibitem{goncalves2018interpolation}
Felipe Gon\c{c}alves and Friedrich Littmann, \emph{Interpolation formulas with
  derivatives in de {B}ranges spaces {II}}, J. Math. Anal. Appl. \textbf{458}
  (2018), no.~2, 1091--1114. \MR{3724718}

\bibitem{grochenig2018sampling}
Karlheinz Gr\"{o}chenig, Jos\'{e}~Luis Romero, and Joachim St\"{o}ckler,
  \emph{Sampling theorems for shift-invariant spaces, {G}abor frames, and
  totally positive functions}, Invent. Math. \textbf{211} (2018), no.~3,
  1119--1148. \MR{3763405}

\bibitem{HW17a}
John~E. Herr and Eric~S. Weber, \emph{Fourier series for singular measures},
  Axioms \textbf{6} (2017), no.~2:7, 13 ppg.,
  http://dx.doi.org/10.3390/axioms6020007.

\bibitem{HS99a}
J.~Higgins and R.~Stens (eds.), \emph{Sampling theory in {F}ourier and signal
  analysis, advanced topics}, Oxford University Press, 1999.

\bibitem{Jam99a}
Philippe Jaming, \emph{Phase retrieval techniques for radar ambiguity
  problems}, J. Fourier Anal. Appl. \textbf{5} (1999), no.~4, 309--329.
  \MR{1700086}

\bibitem{KliSac1992phaseless}
Michael~V. Klibanov and Paul~E. Sacks, \emph{Phaseless inverse scattering and
  the phase problem in optics}, J. Math. Phys. \textbf{33} (1992), no.~11,
  3813--3821. \MR{1185858}

\bibitem{KliSacTik1995phase}
Michael~V. Klibanov, Paul~E. Sacks, and Alexander~V. Tikhonravov, \emph{The
  phase retrieval problem}, Inverse Problems \textbf{11} (1995), no.~1, 1--28.
  \MR{1313598}

\bibitem{LS84a}
Aharon Levi and Henry Stark, \emph{Image restoration by the method of
  generalized projections with application to restoration from magnitude}, J.
  Opt. Soc. Amer. A \textbf{1} (1984), no.~9, 932--943. \MR{758183}

\bibitem{MW15a}
St\'ephane Mallat and Ir\`ene Waldspurger, \emph{Phase retrieval for the
  {C}auchy wavelet transform}, J. Fourier Anal. Appl. \textbf{21} (2015),
  no.~6, 1251--1309. \MR{3421917}

\bibitem{McD04a}
John~N. McDonald, \emph{Phase retrieval and magnitude retrieval of entire
  functions}, J. Fourier Anal. Appl. \textbf{10} (2004), no.~3, 259--267.
  \MR{2066423}

\bibitem{netrapalli2013phase}
Praneeth Netrapalli, Prateek Jain, and Sujay Sanghavi, \emph{Phase retrieval
  using alternating minimization}, Advances in Neural Information Processing
  Systems, 2013, pp.~2796--2804.

\bibitem{PYB14a}
Volker Pohl, Fanny Yang, and Holger Boche, \emph{Phaseless signal recovery in
  infinite dimensional spaces using structured modulations}, J. Fourier Anal.
  Appl. \textbf{20} (2014), no.~6, 1212--1233. \MR{3278866}

\bibitem{Ros84a}
Joseph Rosenblatt, \emph{Phase retrieval}, Comm. Math. Phys. \textbf{95}
  (1984), no.~3, 317--343. \MR{765273}

\bibitem{tan2019phase}
Yan~Shuo Tan and Roman Vershynin, \emph{Phase retrieval via randomized
  {K}aczmarz: theoretical guarantees}, Inf. Inference \textbf{8} (2019), no.~1,
  97--123. \MR{3922404}

\bibitem{Tha11a}
Gaurav Thakur, \emph{Reconstruction of bandlimited functions from unsigned
  samples}, J. Fourier Anal. Appl. \textbf{17} (2011), no.~4, 720--732.
  \MR{2819174}

\bibitem{yang2019generalized}
Y.~Wang and Z.~Xu, \emph{Generalized phase retrieval: Measurement number,
  matrix recovery and beyond}, Appl. Comput. Harmon. Anal., to appear., 2019.

\bibitem{Web04b}
Eric Weber, \emph{The geometry of sampling on unions of lattices}, Proc. Amer.
  Math. Soc. \textbf{132} (2004), no.~12, 3661--3670.

\bibitem{weber2019paley}
\bysame, \emph{The {P}aley-{W}iener theorem for singular measures on
  $(-1/2,1/2)$}, J. Fourier Anal. Appl.,
  \texttt{doi.org/10.1007/s00041-019-09671-3}, 2019.

\bibitem{wei2015solving}
Ke~Wei, \emph{Solving systems of phaseless equations via {K}aczmarz methods: A
  proof of concept study}, Inverse Problems \textbf{31} (2015), no.~12, 125008.

\end{thebibliography}
\nocite{*}

\end{document}